\documentclass[3p]{elsarticle}
\usepackage{amsmath,amssymb,amsfonts}
\usepackage{graphicx}
\usepackage{float}
\graphicspath{{figures/}}
\usepackage{booktabs}
\usepackage{multirow}
\usepackage{siunitx}
\usepackage{enumitem}
\usepackage{hyperref}
\usepackage{xcolor}
 \usepackage{booktabs}
 \usepackage{multirow}
 \usepackage{amsmath}
 \usepackage{microtype}
 \usepackage{siunitx}  
 \usepackage{xcolor}   
\newtheorem{remark}{Remark}

\newtheorem{proposition}{Proposition}
\newtheorem{proof}{Proof}

\definecolor{addedorange}{RGB}{220,120,20}
\usepackage{siunitx}
\DeclareSIUnit{\degCA}{\degree CA}

\begin{document}

\begin{frontmatter}

\title{Learning-Based Decision Making for Combustion Phasing\\ Control in Multi-Fuel CI Engines with Latent Fuel Reactivity Estimation
\author[umn]{Rajasree Sarkar\fnref{equal}}
\author[umn]{Aditya Satish Patil\fnref{equal}}
\author[umn]{Arunava Banerjee\fnref{CA}}
\author[umn]{Ihsan Berk Altiner}
\author[umn]{Zongxuan Sun}
\author[arl]{Kenneth Kim}
\author[arl]{Chol-Bum Mike Keown}
\address[umn]{Department of Mechanical Engineering, University of Minnesota Twin Cities, Minneapolis, MN 55455, USA.}
\address[arl]{DEVCOM Army Research Laboratory, Aberdeen Proving Ground, MD, USA, 21005.}
\fntext[equal]{These authors contributed equally to this work.}
\fntext[CA]{Corresponding author: Arunava Banerjee, Department of Mechanical Engineering, University of Minnesota Twin Cities, Minneapolis, MN 55455, USA. email: abanerje@umn.edu}
}

\begin{abstract}

Multi-fuel compression-ignition engines offer fuel flexibility but introduce uncertain, time-varying fuel reactivity, represented by cetane number (CN), which complicates cycle-to-cycle combustion-phasing control. This work formulates CA50 regulation under latent CN variation as a partially observable sequential decision problem and systematically evaluates controllers with increasing temporal and representational capacity including LinUCB, history-augmented contextual bandits, observation-only DDPG, recurrent DDPG, and a proposed GRU-guided RL framework. A Gaussian-process surrogate trained on experimental multi-fuel engine data provides a controlled and reproducible evaluation environment. Results show that myopic and fixed-history bandit methods degrade under CN variation, observation-only RL suffers from latent-state aliasing, and generic recurrence is insufficient when CN evolves rapidly. The proposed framework learns a compact GRU-based representation of fuel reactivity from combustion history and conditions both actor and critic on this estimated signal rather than oracle CN. By training the policy on the same imperfect fuel-reactivity information available at deployment, the controller avoids train-deploy inconsistency in conventional online estimate-then-control pipelines. Across unseen CN trajectories, the policy achieves stable CA50 regulation with mean absolute tracking error below $0.25^\circ$ CA at the training setpoint, while producing smooth, physically consistent SOI and glow-plug-power actuation. These results show that combustion control under latent, continuously evolving fuel dynamics requires more than standalone estimation or generic recurrence. By aligning fuel-reactivity inference with control policy learning, the proposed framework enables reactivity-aware decision-making using the same estimated state available during deployment.
\end{abstract}

\begin{keyword}
Reinforcement learning \sep contextual bandit \sep gated recurrent unit \sep combustion phasing control \sep multi-fuel engines \sep unknown varying cetane  \sep partial observability \sep gaussian process surrogate
\end{keyword}

\end{frontmatter}

\section{Introduction}

The transition toward low-carbon propulsion in aviation, unmanned aerial systems, and distributed energy has intensified interest in multi-fuel compression-ignition (CI) engines' capability of operating across a wide spectrum of sustainable fuel blends \cite{pal2024data, splitter2014fuel}. A central challenge in such systems is maintaining consistent combustion behavior despite substantial variability in fuel properties \cite{dong2022data}. Combustion phasing typically quantified by the crank angle corresponding to 50\%  fuel energy release (CA50) is a primary control objective, as it directly governs thermal efficiency, emissions, and engine operability \cite{halbe2017control}. In practice, CA50 must be regulated within a narrow tolerance band (on the order of a few crank-angle degrees) to ensure optimal performance while respecting hardware and safety constraints.  The fuel reactivity, commonly characterized by cetane number (CN), varies significantly across fuel batches, blending ratios, and alternative fuel types, and is often not directly measurable in real-time. Since CN strongly influences ignition delay and subsequent combustion phasing, even moderate variations can induce substantial deviations in CA50, leading to efficiency loss, increased emissions and combustion noise, and, in extreme cases, misfire or knock. Consequently, achieving reliable cycle-to-cycle regulation of CA50 under unknown and time-varying CN remains a challenging and practically critical control problem.

\subsection{Limitations of Conventional Combustion Controllers}

Production ECU lookup tables encode injection timing as a static function, embedding fuel-specific calibrations that must be fully redetermined when fuel properties change \cite{Yang2013, Yoon2007}. Closed-loop cylinder-pressure feedback only partially compensates for this mismatch during fuel switching~\cite{Willems2010}, and adaptive LUT updating~\cite{zurbriggen2014optimal} extends tolerance only to gradual, bounded drift. Gain-scheduled PID controllers face a more fundamental obstacle. Since CN modifies autoignition activation energy, a change in fuel reactivity is structurally equivalent to an operating-point shift, rendering every existing gain schedule fuel-specific by construction~\cite{killingsworth2009hcci, arora2017real}. Model predictive control, the current state of the art for RCCI and low-temperature combustion~\cite{raut2018dynamic, pamminger2022model}, intensifies rather than resolves this dependency, while physics-based MPC embeds fuel chemistry into integral model parameters, limiting robustness \cite{mishra2021design}. Data-driven MPC substitutes distributional assumptions that are also not very reliable, as reported in \cite{peng2023model}. The accuracy degradation on low-reactivity oxygenated fuels requires full reparameterisation, and as experimentally demonstrated in \cite{ansari2019experimental}, varying CN in a predefined high range produces qualitatively distinct combustion behavior that no single calibrated model can represent, confirming that CN variation constitutes a genuine out-of-distribution event for all model-based controllers. Adaptive methods \cite{larimore2015adaptive} and extremum seeking approaches \cite{killingsworth2009hcci, pla2021line} track gradual parametric drift but violate their quasi-static assumptions under rapid CN transients; extremum seeking alone requires on the order of minutes to converge per operating point \cite{killingsworth2009hcci}. Across all paradigms, the shared architectural limitation is the same that is, adaptation remains reactive, model-dependent, and too slow for the timescale of real-world CN variability. Observer-based approaches partially address this limitation by recovering fuel-related quantities from in-cylinder pressure \cite{pla2021line}, though they target proxies rather than CN directly and require sensing infrastructure, that is absent from most production platforms. More broadly, recent data-driven approaches have explored explicit fuel-state estimation as an intermediate step in the control loop. In particular, our prior studies \cite{banerjee2025data, pal2024data} demonstrated that CN can be inferred online from combustion measurements and used for combustion phasing control during discrete fuel switches in UAS engines, establishing the feasibility of data-driven fuel-state inference in closed-loop control. Extending this paradigm to continuously varying CN introduces a structural consideration in which treating CN as an explicit intermediate variable couples closed-loop performance to the accuracy and latency of the estimation stage. While this estimate-then-control architecture performs well under discrete fuel switching, its separated design becomes vulnerable under rapid CN evolution. The controller is optimized assuming an accurate CN estimate, so residual estimation bias and transient lag directly propagate into the control input without an explicit corrective mechanism. Since the controller is not trained under the estimator's approximation errors, performance degrades precisely when estimation is most difficult, especially during fast CN transients. This motivates a shift in architecture. Under continuously varying and unobserved fuel reactivity, this vulnerability is mitigated by learning a deployable latent representation offline so that the actor and critic are conditioned on the same noisy inferred fuel-reactivity signal available at deployment, rather than on oracle CN as a policy-state variable. Rather than decoupling performance from estimation fidelity entirely, this design ensures that the policy implicitly learns to be robust to the GRU's approximation errors, because the controller never receives oracle CN as an input during policy training or execution.

\subsection{Learning-Based Formulations and Challenges}

Learning-based decision-making frameworks, such as reinforcement learning (RL) and contextual bandits (CB), provide a natural mechanism to integrate decision-making directly with observed system behavior through policy learning from operational experience, without requiring an explicit plant model or a separate intermediate estimation stage. Recent work has established RL across a range of combustion control tasks, including injection timing optimization in multi-pulse CI engines \cite{henry2022deep}, multi-fuel strategy discovery \cite{wimer2023deep}, safe emission control on diesel platforms \cite{norouzi2023safe}, cycle-to-cycle variability regulation on real SI engines \cite{maldonado2024reinforcement}, and fuel-transfer on real HCCI hardware \cite{bedei2025safe}. An offline plant model has also been used to train RL policies for hydrogen-diesel dual-fuel operation with subsequent experimental validation \cite{sharma2025safe}. However, none of these works addresses real-time CA50 regulation under \emph{unknown, continuously varying} CN.

The structure of the CA50 problem makes the formulation choice non-trivial. Combustion responds to injection timing through a predominantly static input–output map, with intake conditions resetting each cycle, such that cycle-to-cycle variation is not driven by action-induced state evolution. This suggests a bandit-like structure, where a contextual bandit (CB) agent selects actions to maximize immediate reward conditioned on the current operating context. However, this apparent suitability breaks down once CN variability is accounted for. When CN evolves as a latent exogenous process, current CN is temporally correlated with previous CN through fuel supply dynamics, and further, remains unobserved. As a result, the effective context is neither fully observable nor i.i.d., violating the fundamental assumptions underlying contextual bandits. History augmentation partially recovers the latent CN signal from recent combustion traces, but a fixed-length window saturates under rapid CN evolution. Standard observation-based RL policies inherit the same aliasing vulnerability through the POMDP partial-observability problem \cite{ghosh2021generalization, liu2022partially, omidshafiei2017deep}. In the absence of an internal mechanism to represent latent fuel reactivity, distinct CN states may generate identical observations, compelling the policy to produce identical control actions despite fundamentally different underlying combustion dynamics, thereby inducing systematic performance degradation. Collectively, these observations reveal that the core difficulty is not tied to a specific control paradigm, but arises from the information structure of the problem itself. Unknown and continuously evolving CN introduces a latent, temporally correlated state that is not directly observable, rendering both myopic contextual decision-making and observation-only RL insufficient. This establishes CA50 regulation under fuel variability as a fundamentally partially observable sequential decision problem, where effective control requires both recovery of latent fuel-state information from history and its direct integration into the decision-making process.

\subsection{Learning-Based Combustion Phasing Control}

To address the above mentioned challenges, this paper investigates five learning formulations of increasing temporal and representational capacity: (i) LinCB, (ii) history-augmented CB (H-CB), (iii) observation-only DDPG, (iv) recurrent DDPG (RDPG), and (v) the proposed GRU-RL framework. All methods are evaluated on a Gaussian process surrogate trained on experimental engine data, which provides a controlled and reproducible environment for comparing decision-making architectures under identical CN-variation scenarios. The progression is designed to answer a specific question: when CN is unobserved and evolves continuously, how much can be recovered by immediate contextual decision making, how much by fixed-window history augmentation, and when is an explicitly sequential policy required? LinCB tests the bandit assumption directly. H-CB examines whether short history is sufficient to recover the missing fuel-state information. Observation-only DDPG tests whether replacing CB with RL alone resolves the partial-observability limitation. RDPG then introduces generic recurrence to capture latent temporal structure. Finally, the proposed GRU-RL framework builds on our earlier CN-estimation-based studies by using a GRU trained offline to extract a deployable fuel-state estimate from recent combustion observations, which is then used by the actor--critic policy for CA50 regulation without access to oracle CN during policy learning or deployment. The comparison shows that fixed-window history augmentation improves over standard CB but degrades as CN variation becomes faster and more severe, while observation-only RL remains vulnerable to the same latent-state ambiguity. Recurrent policies perform substantially better, confirming that temporal inference is essential in this problem. Among them, the proposed GRU-RL framework delivers the most reliable CA50 regulation across unseen CN trajectories, indicating that supervised recovery of fuel-state information from history, when coupled to sequential policy optimization, is more effective than either myopic contextual decision making or recurrence alone. The contributions of this paper are summarized as follows:

\begin{enumerate}[leftmargin=*, label=(\arabic*)]

\item \textit{Systematic CB-to-RL progression:}
A unified evaluation of LinCB, H-CB, observation-only DDPG, RDPG, and the proposed GRU-RL framework (CN-augmented DDPG) is performed on a common GP-based surrogate isolating the role of temporal information and model capacity. This revealed that for the considered engine problem, CB is fundamentally limited by latent-state ambiguity, while fixed-window history provides only partial mitigation. On the other hand, observation-only RL does not resolve partial observability, thereby motivating the proposed GRU-based approach.

\item \textit{Proposed GRU-guided sequential RL framework for fuel-flexible CA50 regulation:}
A controller is developed in which a GRU, trained offline using privileged CN information, provides a deployable fuel-state estimate from recent combustion observations, and an actor--critic policy uses this estimate for closed-loop CA50 regulation without oracle CN at training or deployment.

\item \textit{Demonstrated improved robustness to latent CN variation via integrating fuel-state inference with sequential decision making:}
The proposed GRU-RL framework demonstrated the most reliable regulation across increasingly severe and unseen CN trajectories, consistently outperforming both myopic and history-augmen-ted baselines as well as generic recurrent policies.
\end{enumerate}

The rest of the paper is organized as follows: Section~\ref{sec:problem} states the control problem along with introducing the operating scenarios. Section~\ref{contextBand} and Section~\ref{sec:methods} details the CB and RL methodologies (baselines and proposed), including reward design and CN-augmented architectures, respectively. Section~\ref{sec:results} presents results and discussion, including generalization to unseen CN trajectories and inference-time measurements. Section~\ref{sec:conclusion} concludes and outlines future directions.

\section{Problem Formulation}\label{sec:problem}
\subsection{Engine Variables and Control Objective}

We consider a single-cylinder multi-fuel compression-ignition (CI) engine operated at fixed speed and load, where combustion phasing is regulated on a cycle-by-cycle basis. Let $k \in \mathbb{N}$ denote the engine cycle index. At each cycle, the controller selects the control input vector 
\begin{equation}
u_k \triangleq 
\begin{bmatrix}
\mathrm{SOI}_k \\
\mathrm{GPP}_k
\end{bmatrix},
\end{equation}
where $\mathrm{SOI}_k$ denotes the start-of-injection timing ($^\circ$ CA) and $\mathrm{GPP}_k$ denotes the glow plug power (W), which provides ignition assistance. The controlled output $y$ is the crank angle corresponding to $50\%$ cumulative heat release (denoted as CA50),
\begin{equation}
y_k \triangleq \mathrm{CA50}_k,
\end{equation}
which serves as the primary indicator of combustion phasing and reflects the combined influence of injection timing, ignition assistance, and fuel reactivity. Fuel reactivity is characterized by the cetane number (CN), modeled as an operating parameter
\begin{equation}
\theta_k \triangleq \mathrm{CN}_k,
\end{equation}
which may vary over time due to fuel blending and switching. Variations in $\theta_k$ significantly influence ignition delay and thereby induce shifts in combustion phasing. In practice, $\theta_k$ is difficult to be directly measured and hence, could be inferred indirectly from observed combustion responses.

The control objective is to regulate CA50 to a desired reference $y^\star$ by suitably choosing $u_k$ at each cycle while meeting actuator and safety constraints. Particularly, the control inputs must lie within admissible operating bounds
\begin{align}
\mathrm{SOI}_{\min} \le \mathrm{SOI}_k \le \mathrm{SOI}_{\max}, \\
\mathrm{GPP}_{\min} \le \mathrm{GPP}_k \le \mathrm{GPP}_{\max},
\end{align}
and their cycle-to-cycle variation is limited by actuator rate constraints
\begin{align}
|\mathrm{SOI}_k - \mathrm{SOI}_{k-1}| \le \Delta_{\mathrm{SOI}}, \\
|\mathrm{GPP}_k - \mathrm{GPP}_{k-1}| \le \Delta_{\mathrm{GPP}},
\end{align}
which prevent abrupt changes in injection timing or glow plug power between successive cycles.

Defining the tracking error
\begin{equation}
e_k \triangleq y^\star - y_k,
\end{equation}
the control problem is to design a policy $\pi$ that maps available measurements and past observations to the control input $u_k$ such that $e_k \to 0$ as $k \to \infty$. It is to be noted that CA50 is governed by strongly nonlinear combustion physics and must be controlled through two coupled inputs, SOI and GPP, despite only a single measured output. On the other hand, GPP affects combustion indirectly through ignition assistance and is subject to energy constraints. Excessive GPP, particularly with high-CN fuels, can over-advance autoignition, intensify heat-release rates, and drive the maximum pressure-rise rate (MPRR) beyond safe limits, increasing the risk of knock-like abnormal combustion and potentially severe mechanical damage. The problem is further complicated by the fact that fuel reactivity, parameterized by CN is time-varying and not directly measured online. Thus, the problem can be best viewed as a constrained, nonlinear, partially observed combustion-phasing control problem rather than a simple output regulation problem.

\subsection{Central challenges of hidden CN}
\label{sec:info_structure}

For unknown CN, the relevant execution-time latent state is not fully observed. A convenient minimal representation can be
\begin{equation}
\xi_k \triangleq (e_k,\Delta e_k,\theta_k),
\end{equation}
where $\theta_k$ is the unmeasured cetane number and $\Delta e_k=e_k - e_{k-1}$. The controller does not observe $\xi_k$ directly; instead, it receives the measurable combustion quantities
\begin{equation}
o_k \triangleq (e_k,\Delta e_k),
\end{equation}
and may additionally maintain a history
\begin{equation}
h_k \triangleq (o_1,u_1,\ldots,o_{k-1},u_{k-1},o_k)
\end{equation}
from which information about $\theta_k$ can only be inferred indirectly. The hidden-CN problem is therefore a partially observable control problem where different latent fuel states can induce the same instantaneous observable error coordinates while still requiring different SOI/GPP actions to remain near the CA50 target. The following two propositions formalize the central difficulty of hidden-CN combustion control. 

\begin{proposition}
\label{prop:reactive_insufficiency}
\leavevmode\\\textnormal{\textbf{Reactive observation insufficiency.}}
Suppose there exist two latent states $\xi_k^{(1)}=(e,\Delta e,\theta^{(1)})$ and $\xi_k^{(2)}=(e,\Delta e,\theta^{(2)})$ with $\theta^{(1)} \neq \theta^{(2)}$ that induce the same observable state $o_k=(e,\Delta e)$, but whose optimal actions satisfy $u^\star(\xi_k^{(1)}) \neq u^\star(\xi_k^{(2)})$. Then no deterministic reactive policy of the form $\pi(o_k)$ can be optimal for both latent states.
\end{proposition}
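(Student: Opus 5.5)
The argument is a short contradiction built on the fact that a reactive policy is a function of $o_k$ alone. First I fix what ``optimal for a latent state'' means: a policy $\pi$ is optimal at $\xi_k$ if the action it issues there equals $u^\star(\xi_k)$. If the optimal action is not unique, I read $u^\star(\xi_k)$ as the optimal set, and the hypothesis as saying these sets are disjoint. Second, I note that a deterministic reactive policy $\pi:o_k\mapsto u_k$ cannot depend on $\theta_k$. Whenever $\xi_k^{(1)}$ and $\xi_k^{(2)}$ induce the same observation $o_k=(e,\Delta e)$, the policy therefore issues the single action $\bar u \triangleq \pi(e,\Delta e)$ in both latent states.

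Next, suppose for contradiction that $\pi$ is optimal for both latent states. Then $\bar u = u^\star(\xi_k^{(1)})$ and $\bar u = u^\star(\xi_k^{(2)})$, so $u^\star(\xi_k^{(1)}) = u^\star(\xi_k^{(2)})$. This contradicts the hypothesis $u^\star(\xi_k^{(1)})\neq u^\star(\xi_k^{(2)})$. In the set-valued reading, $\bar u$ would lie in two disjoint sets, which is impossible. Hence at least one of the two latent states receives a suboptimal action, and no such $\pi$ is optimal for both.

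I would close with a remark that follows directly from the argument. If the value gap between the optimal and the issued action is strictly positive and both latent states occur with positive probability (or positive occupancy), then the expected return of any reactive $\pi$ is strictly below the latent-state-optimal return. This motivates policies conditioned on the history $h_k$, or on an estimate $\hat\theta_k$ computed from it.

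The only delicate step is the definitional one: making ``optimal'' well defined when CN varies over time. If optimality is meant in the sequential sense, I would state that $u^\star(\xi_k)$ denotes the greedy action with respect to the optimal latent-state $Q$-function, $u^\star(\xi)\in\arg\max_u Q^\star(\xi,u)$. The contradiction then goes through unchanged, because it uses only that $\pi$ is constant on the fibre $\{\xi:o(\xi)=o_k\}$.
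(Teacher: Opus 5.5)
Your proof is correct and follows the same argument as the paper: a deterministic memoryless policy issues the single action $\pi(e,\Delta e)$ in both latent states, so it can coincide with at most one of the two distinct optimal actions. Your extra care in reading set-valued optima as disjoint sets and in interpreting optimality through $Q^\star$ is a harmless refinement that the paper leaves implicit.
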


\begin{proof}
A deterministic reactive (i.e. memoryless) policy maps the common observation $o_k=(e,\Delta e)$ to a single action $\pi(o_k)$. If the two latent states require different optimal actions, then $\pi(o_k)$ can match at most one of them. Therefore the same reactive policy must be suboptimal for at least one latent state. \qed
\end{proof}

\begin{proposition}
\label{prop:cb_mixture}
\leavevmode\\\textnormal{\textbf{Observable-context reward mixing under hidden CN.}}
Let $x_k$ denote the observable context available to a contextual bandit under hidden CN. Then the conditional reward law satisfies
\begin{equation}
\mathbb{E}[r_k \mid x_k]
=
\sum_{\theta_k} p(\theta_k \mid x_k)\,\mathbb{E}[r_k \mid x_k,\theta_k],
\end{equation}
so $x_k$ is not a sufficient statistic for the reward distribution when CN is unobservable and temporally correlated.
\end{proposition}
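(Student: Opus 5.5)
The identity is the law of total expectation, with $\theta_k$ as the conditioning variable, applied inside the conditional law given $x_k$. The non-sufficiency claim then follows from a short argument in the spirit of Proposition~\ref{prop:reactive_insufficiency}.

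\emph{Step 1 (the identity).} I would treat $(x_k,\theta_k,r_k)$ as jointly distributed random variables under the closed-loop data-generating process. Here $\theta_k$ takes values in a countable (or finite, discretized) CN set, which matches the sum in the statement. For a continuous CN the sum becomes an integral against $p(\theta_k\mid x_k)\,d\theta_k$. By the tower property, $\mathbb{E}[r_k\mid x_k]=\mathbb{E}\big[\mathbb{E}[r_k\mid x_k,\theta_k]\mid x_k\big]$. Expanding the outer expectation over the conditional distribution of $\theta_k$ given $x_k$ gives the displayed mixture. The only technical requirement is integrability of $r_k$, which holds because the reward is bounded: the GP surrogate output and the actuator inputs are box-constrained. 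The action $u_k$ can be included in $x_k$ or held fixed without changing the argument.

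\emph{Step 2 (non-sufficiency).} Sufficiency of $x_k$ for the reward law would mean $p(r_k\mid x_k,\theta_k)=p(r_k\mid x_k)$ for all $\theta_k$ in the support of $p(\cdot\mid x_k)$. In particular it would require $\mathbb{E}[r_k\mid x_k,\theta_k]$ to be constant in $\theta_k$. I would refute this using the physics already invoked in Section~\ref{sec:info_structure}. CN shifts ignition delay, so for a fixed observable context and action, $y_k$, and hence $e_k$ and the tracking reward, depends on $\theta_k$. It then suffices to exhibit two values $\theta^{(1)}\neq\theta^{(2)}$, both with positive posterior mass given the same $x_k$, whose conditional mean rewards differ. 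This is exactly the situation hypothesized in Proposition~\ref{prop:reactive_insufficiency}. The mixture weights $p(\theta_k\mid x_k)$ then genuinely matter: two histories or time points with identical $x_k$ but different posteriors over CN produce different reward laws. A stationary bandit reward model $x_k\mapsto\mathbb{E}[r_k\mid x_k]$ therefore does not describe the environment.

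\emph{Step 3 (role of temporal correlation).} Temporal correlation of CN is what makes the posterior $p(\theta_k\mid x_k)$ drift over time. Because $\theta_k$ depends on $\theta_{k-1}$, the conditional law of $\theta_k$ given $x_k$ depends on the realized CN path. The mixing weights are therefore nonstationary and non-i.i.d. across rounds, which violates the contextual bandit assumption that $r_k\mid x_k$ is stationary. I expect this step to be the main obstacle, because "temporally correlated" is not formalized in the paper. I would make it precise by assuming $\theta_k$ follows a Markov chain that is not independent of the past observables, and then showing that $p(\theta_k\mid x_k,h_{k-1})\neq p(\theta_k\mid x_k)$. Under that assumption the history carries reward-relevant information that $x_k$ alone discards.
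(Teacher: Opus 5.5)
Your proposal is correct and follows essentially the paper's route: the display is the law of total expectation over the unobserved $\theta_k$, and non-sufficiency comes from the weights $p(\theta_k\mid x_k)$ inheriting dependence on the latent, temporally correlated fuel history. Your Steps~2--3 usefully make explicit hypotheses the paper leaves implicit, namely a reward that genuinely varies with $\theta_k$, a non-degenerate posterior, and a Markov CN model. Two small fixes are needed. In Step~1, the CB tracking term $-50e_k^2$ is unclipped and the measured CA50 carries additive Gaussian noise, so $r_k$ is not bounded; it is still integrable, because the noise has a finite second moment and the GP mean is bounded on the compact input/CN box. In Step~3, a posterior shift $p(\theta_k\mid x_k,h_{k-1})\neq p(\theta_k\mid x_k)$ changes the conditional mean reward only if the difference is not orthogonal to $\theta\mapsto\mathbb{E}[r_k\mid x_k,\theta]$; this is automatic for two-point support but should be stated in general.
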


\begin{proof}
If CN is hidden, the agent cannot condition on the full context $(x_k,\theta_k)$ and instead observes only the mixture indexed by $x_k$. Because $\theta_k$ evolves through the unobserved fuel process, the mixture weights $p(\theta_k \mid x_k)$ inherit dependence on latent history rather than on the observable context alone. Hence the i.i.d.\ contextual-bandit assumption fails for the \emph{observable} context seen by the agent, even though it would hold for the hypothetical full context that includes CN. \qed
\end{proof}

Proposition~\ref{prop:reactive_insufficiency} explains why observation-only policies can look acceptable in CA50 space while still converging to physically wrong actuation. Proposition~\ref{prop:cb_mixture} explains why contextual-bandit reward models become mixtures over latent fuel states unless some form of history-based inference or latent-state estimation is added. The sections that follow evaluate precisely these different responses to the same information-structure problem.

\section{Contextual Bandit Formulation} \label{contextBand}
\subsection{Bandit Structure of the Combustion Control Problem}
At fixed speed and load, combustion responds to injection timing and glow plug power through a predominantly static input-output map where the combustion outcome $y_k$ is governed primarily by the current control input $u_k$ and prevailing fuel reactivity $\theta_k$,
\begin{equation}
     y_k = f(u_k,\, \theta_k) + \varepsilon_k,
 \end{equation}
 where $\varepsilon_k$ captures cycle-to-cycle combustion variability. Crucially, the system exhibits quasi-static behavior across cycles, such that cycle-to-cycle variation arises from exogenous CN fluctuations rather than from dynamical consequences of prior actions. This absence of action-induced state carry-over makes the problem structurally closer to a bandit. In the CB formulation, each combustion cycle corresponds to a decision round. At cycle $k$, the controller observes a context $x_k$, selects a control action $u_k$ from the admissible set $\bar{\mathcal{U}}$, and receives an immediate scalar reward $r_k$ encoding the CA50 tracking objective along with regularization terms for actuator smoothness and GPP usage. The CB agent does not maintain a value function over future cycles and does not propagate credit across time. 

\subsection{Admissible Action Set}

At each cycle $k$, candidate actions are evaluated over a discrete grid $\bar{\mathcal{U}} \subset \mathbb{R}^2$ enforcing actuator bounds and cycle-to-cycle rate limits as
\begin{equation}
    \bar{\mathcal{U}} = \Big\{ u = (\mathrm{SOI}, \mathrm{GPP})^\top :
    \begin{aligned}
        &\mathrm{SOI}_{\min} \le \mathrm{SOI} \le \mathrm{SOI}_{\max},\\
        &\mathrm{GPP}_{\min} \le \mathrm{GPP} \le \mathrm{GPP}_{\max},\\
        &|\mathrm{GPP} - \mathrm{GPP}_{k-1}| \le \Delta_{\mathrm{GPP}}\\
        &\mathrm{GPP} \le \mathrm{GPP}_k^{\mathrm{ceil}}
    \end{aligned}
    \Big\},
\end{equation}
 where $\mathrm{GPP}_k^{\mathrm{ceil}} = \max(\mathrm{GPP}_{k-1} - \Delta_{\mathrm{GPP}},\;\mathrm{GPP}_{\max}(1 - c_k))$ is a hard CN-dependent ceiling and \begin{align}\label{CN_est}
c_k =
\frac{\theta_k - \theta_{\min}}
{\theta_{\max} - \theta_{\min}}
\end{align}
is the normalized known measured CN. All quantities determining $\bar{\mathcal{U}}$ are observable before the action is selected and are used only to restrict the feasible set without modifying reward. The action space $\mathcal{U}$ is constructed as a Cartesian product of $n_{\mathrm{SOI}}$ uniformly spaced SOI values and $n_{\mathrm{GPP}}$ uniformly spaced GPP values, yielding $|\mathcal{U}| = n_{\mathrm{SOI}} \times n_{\mathrm{GPP}}$ discrete actions. The UCB argmax is evaluated by direct enumeration over this grid, which is computationally trivial for the grid sizes considered. Actions violating rate constraints relative to the previous cycle are excluded
from consideration during action selection.

\subsection{LinUCB with Known CN Context}
\label{sec:linucb_known}

When the cetane number $\theta_k$ is observable, the LinUCB algorithm \cite{li2010contextual, chu2011contextual} is applied with disjoint linear models, where each discrete action $u \in \mathcal{U}$ maintains its own parameter vector. The context $x_k$ at cycle $k$ is defined as
\begin{equation}
    x_k = \begin{bmatrix}
        1,\;
        c_k,\;
        c_k^2,\;
        \beta_k,\;
        e_{k-1},\;
        |e_{k-1}|
    \end{bmatrix}^{\top} \in \mathbb{R}^6,
\end{equation}
where $c_k = (\theta_k - \theta_{\min}) / (\theta_{\max} - \theta_{\min})$ denotes the normalised CN. Let $E_k = \sum_{i=1}^{k}\mathrm{GPP}_i \cdot \Delta t_c$ denote the cumulative glow plug energy consumed up to cycle $k$, where $\Delta t_c = 120/N_{\mathrm{RPM}}$ is the cycle duration. The normalised remaining GPP energy budget is then defined as $\beta_k = \max(0,\,1 - E_k/\mathcal{B}) \in [0,1]$ where $\mathcal{B}$ is prefixed GPP energy allocation. The previous-cycle tracking error is defined as $e_{k-1} = \mathrm{clip}((y^\star - y_{k-1})/\gamma_e,\,-1,\,1)$, where $\gamma_e = 10$ is set for normalization. The quadratic CN term $c_k^2$ captures the nonlinear dependence of the reward on fuel reactivity, while the absolute value $|e_{k-1}|$ enables the linear model to distinguish symmetrically between under-advancing and over-advancing conditions.

For each action $u$, the algorithm maintains $A_u \in \mathbb{R}^{d \times
d}$ and $b_u \in \mathbb{R}^d$, initialised as $A_u = I_d$ and $b_u = 0$ where $d=6$.
The UCB acquisition function selects actions according to
\begin{equation}
    u_k = \arg\max_{u \in \bar{\mathcal{U}} \subset \mathcal{U}}
    \left[
        \hat{\theta}_u^\top x_k
        + \alpha \sqrt{x_k^\top A_u^{-1} x_k}
    \right],
    \label{eq:ucb_known}
\end{equation}
where $\hat{\theta}_u = A_u^{-1} b_u$ is the ridge regression estimate of the reward parameter for action $u$, and $\alpha > 0$ controls the exploration-exploitation trade-off. After observing reward $r_k$, the parameters for the selected $u_k$ are updated as
\begin{align}\label{eqn:CB_update}
\begin{matrix}
    A_{u_k} &\leftarrow A_{u_k} + x_k x_k^\top\\
    b_{u_k} &\leftarrow ~~b_{u_k} + r_k x_k
\end{matrix}
\end{align}
This disjoint model structure treats each action independently, which is suitable when the action-reward relationship varies substantially across the discrete action space. Thus, at each cycle $k$, the context $x_k$ is processed by the LinUCB policy to select an action $u_k$. The action is applied to the engine, yielding the output $y_k$, which is subsequently mapped to a reward $r_k$ for learning and decision-making. The reward $r_k$ is designed as provided in Subsection \ref{rew_design}.

While the linear reward model underlying LinUCB is a restrictive assumption given the nonlinear dependence of CA50 on the control inputs and latent fuel reactivity, LinUCB serves as the canonical baseline for contextual bandit methods, making it a meaningful reference point. Its performance degradation under nonlinear dynamics serves as the empirical motivation for the history-augmented and recurrent nonlinear methods that follow.

\subsection{History-Augmented LinUCB with Unknown CN}
\label{sec:linucb_unknown}

The apparent structural alignment between combustion and LinUCB breaks down when $\theta_k$ is unknown, because $\theta_k$ evolves as a latent exogenous process correlated across cycles and the optimal action at cycle $k$ depends on the unobserved $\theta_k$. The result is reward aliasing where identical observable contexts can correspond to materially different optimal actions under different latent CN values. History augmentation partially mitigates this by encoding recent combustion observations into the context, allowing the agent to infer $\theta_k$ implicitly. Thus, when the cetane number is unobservable, the context must encode sufficient information to implicitly infer the latent fuel reactivity from past system behavior. Following the principle of history-based context augmentation \cite{li2010contextual,bouneffouf2012contextual}, the context vector incorporates a fixed-length window of recent observations:
\begin{equation*}
    \begin{bmatrix}
1,\;
\beta_k,\;
e_{k-1},\;
|e_{k-1}|,\;
\tilde{u}_{k-1:k-L}^{(\mathrm{SOI})},\;
\tilde{u}_{k-1:k-L}^{(\mathrm{GPP})},\;
\tilde{y}_{k-1:k-L}
\end{bmatrix}^{\top}
\end{equation*}
$\in \mathbb{R}^{4+3L}$, where $L$ is the history window length, and the tilde notation denotes min-max normalization:
\begin{align}
    \tilde{u}^{(\mathrm{SOI})}_i &= \frac{\mathrm{SOI}_i - \mathrm{SOI}_{\min}}
        {\mathrm{SOI}_{\max} - \mathrm{SOI}_{\min}}, \\
    \tilde{u}^{(\mathrm{GPP})}_i &= \frac{\mathrm{GPP}_i - \mathrm{GPP}_{\min}}
        {\mathrm{GPP}_{\max} - \mathrm{GPP}_{\min}}, \\
    \tilde{y}_i &= \frac{y_i}{y_{\mathrm{scale}}},
\end{align}
with $y_{\mathrm{scale}}=20$, chosen to normalize CA50 observations to a comparable range. The normalised remaining GPP energy budget is defined as
\begin{align}
\beta_k = \max(0,\,1 - E_k/\mathcal{B}) \in [0,1].
\end{align} 
The history window provides indirect information about the latent CN through the signature of CA50 responses to past actuator commands. The admissible action set enforces the same rate limits as in the known-CN case. The CN-dependent GPP ceiling is replaced by an ignition-sufficiency
proxy:
\begin{align*}
    \hat{c}_k &= \mathrm{clip}\!\left(
        \frac{(y^\star + \delta) - \bar{y}_k^{(L)}}{2\delta},\;0,\;1
    \right),
    \quad
    \mathrm{GPP}_k^{\mathrm{ceil}}\\
    &= \max\!\left(\mathrm{GPP}_{k-1} - \Delta_{\mathrm{GPP}},\;
                  \mathrm{GPP}_{\max}(1 - \hat{c}_k)\right),
\end{align*}
where $\bar{y}_k^{(L)}$ is the windowed mean CA50 over the last $L$ cycles and $\delta = 5\,^{\circ}$CA. The proxy $\hat{c}_k \to 0$ when CA50 is persistently above the setpoint (late combustion, GPP assistance needed) and $\hat{c}_k \to 1$ when CA50 is at or below the setpoint (GPP reduction warranted). All quantities determining $\bar{\mathcal{U}}$ are observable before action selection. Same as in the known-CN case, H-CB uses the LinUCB algorithm to employ disjoint linear models ${A_u, b_u}$ where ${u \in \mathcal{U}}$. The UCB acquisition in \eqref{eq:ucb_known} is then evaluated using the augmented context vector.

\subsection{Physics-Informed Reward Design for CB}
\label{rew_design}

The reward function $r_k$ that is to be used in \eqref{eqn:CB_update}, is constructed as a structured control objective that balances tight combustion phasing regulation, actuator constraint satisfaction, and appropriate multi-fuel ignition behavior within the learning framework. At each combustion cycle $k$, the instantaneous reward is defined as
\begin{align}
r_k = r_{\text{track}}^{(k)}  + r_{\text{GPP}}^{(k)}+ r_{\text{budget}}^{(k)},
\end{align}
where each term corresponds to a physically interpretable component of the control objective. The tracking term directly encodes the CA50 regulation objective:
\begin{equation*}
    r_{\mathrm{track}}^{(k)} =
    \begin{cases}
        100\exp(-2e_k^2), & |e_k| \le 1\,^{\circ}\mathrm{CA}, \\[4pt]
        -50\,e_k^2,       & \text{otherwise},
    \end{cases}
\end{equation*}
The CA50 response to $(u_{\mathrm{SOI}}, u_{\mathrm{GPP}}, \theta_k)$ is inherently nonlinear and no fixed $\theta^\ast$ can make $\mathbb{E}[r_k \mid x_k, a_k]$ linear in $x_k$. LinUCB is therefore applied as a heuristic under structural model misspecification as discussed in \cite{chu2011contextual,abbasi2011improved,lattimore2020learning}.

Glow plug power (GPP) usage is further shaped in a fuel-aware manner according to combustion physics and to guide learning under varying reactivity conditions. For known CN (LinUCB), the reward component $r_{\mathrm{GPP}}^{(k)}$ is designed as
\begin{equation*}
    r_{\mathrm{GPP}}^{(k)}
    = -\lambda_1\,\mathbf{1}\bigl\{\mathrm{GPP}_k > \mathrm{GPP}_{\mathrm{thr}}\bigr\}
      - \lambda_{\mathrm{mag}}\,c_k\,\tilde{u}_k^{(\mathrm{GPP})},
\end{equation*}
with threshold penalty weight $\lambda_1 = 10$, the magnitude penalty weight $\lambda_{\mathrm{mag}} = 20$, and GPP activation threshold $\mathrm{GPP}_{\mathrm{thr}} = 20\,\mathrm{W}$. The first term imposes a binary penalty whenever GPP exceeds $\mathrm{GPP}_{\mathrm{thr}}$, discouraging unnecessary ignition assist regardless of the current fuel condition. The second term is a CN-weighted GPP magnitude penalty which vanishes at $c_k = 0$ (minimum cetane number, $\theta_k = \theta_{\min}$, where maximum GPP assistance is physically warranted) and reaches its maximum magnitude $-\lambda_{\mathrm{mag}}\,\tilde{u}_k^{(\mathrm{GPP})}$ at $c_k = 1$ (maximum cetane number, $\theta_k = \theta_{\max}$, where glow plug activation is unnecessary). Together, the two terms encode the combustion physics principle that high-reactivity fuel self-ignites readily and should not incur additional ignition energy from the glow plug. On the other hand, for unknown CN (H-CB), $r_{\mathrm{GPP}}^{(k)}$ is designed as
\begin{equation*}
    r_{\mathrm{GPP}}^{(k)}
    = -\lambda_1\,\mathbf{1}\bigl\{\mathrm{GPP}_k > \mathrm{GPP}_{\mathrm{thr}}\bigr\}
      - \lambda_2\,w_k\,\tilde{u}_k^{(\mathrm{GPP})}\cdot
        \mathbf{1}\{e_k \ge 0\},
\end{equation*}
where $w_k = 0.5 + 0.5\exp(-0.5\,e_k^2) \in [0.5,\,1]$ and $\lambda_2=2$. The magnitude penalty is applied only when $e_k \ge 0$ (CA50 at or below setpoint, ignition adequate or excessive) and it is removed when $e_k < 0$ (late combustion, GPP assistance needed). Both terms depend only on current observables $(e_k, \mathrm{GPP}_k)$ and are free of history dependence and latent-variable dependence not representable in $x_k$. 

To further ensure judicious usage of GPP, a penalty based on cumulative GPP energy usage is also imposed as follows:
\begin{equation*}
    r_{\mathrm{budget}}^{(k)}
    = -\lambda_3
      \left(1 - \frac{\mathcal{B}_k}{\mathcal{B}}\right)
      \tilde{u}_k^{(\mathrm{GPP})},
\end{equation*}
where $\lambda_3=10$ and $\mathcal{B}_k = \max(0,\,\mathcal{B} - E_k)$. This term retains dependence on the full action history through $E_k$. The fixed penalty coefficient $\lambda_3$ corresponds to a fixed Lagrangian multiplier rather than an online-learned dual variable. The budget state $\beta_k$ is included in $x_k$ so that the linear model can learn a budget-contingent policy, partially compensating for the missing dual adaptation. LinUCB and H-CB are therefore positioned as near-myopic heuristic optimisers with respect to this term, consistent with established practice in applied CB engineering papers \cite{ayala2024risk,bregere2019target}. Thus, for slowly varying CN, the finite history window in H-CB is expected to retain sufficient temporal information to represent the prevailing CN signature. In contrast, when CN changes on a timescale shorter than $L$ cycles, the window necessarily averages observations across multiple CN regimes. The resulting feature representation becomes stale with respect to the current fuel reactivity, leading to degraded tracking performance.

\section{Reinforcement Learning Formulation}\label{sec:methods}

Although the static combustion map motivates CB, the unknown and time-varying CN introduces partial observability that greedy, single-step policies cannot fully resolve. From an RL perspective, the combustion control problem is a POMDP in which the latent state includes $\theta_k$, and the policy maps observation histories to control inputs,
\begin{equation}
u_k = \pi(o_{1:k}),
\end{equation}
where $o_{1:k}$ denotes the history of observed quantities up to cycle $k$. This formulation allows RL algorithms to implicitly or explicitly represent the hidden fuel state through temporal credit assignment, a capability that single-step CB policies structurally lack. This section presents the RL framework developed for combustion phasing control using the deep deterministic policy gradient (DDPG) algorithm as the continuous-control backbone. DDPG employs an actor network $\pi_\theta(s_k)$, where $s_k$ denotes the state at cycle $k$, that outputs continuous SOI/GPP actions and a critic network $Q_\phi(s_k, u_k)$ that estimates the expected discounted return. Its off-policy formulation permits efficient reuse of simulated transitions. Four state-information configurations are defined below: a known-CN oracle state, an observation-only hidden-CN state, an RDPG recurrent hidden-CN state, and the proposed GRU-estimator-augmented state. The reported unknown-CN RL comparison focuses on the three deployable hidden-CN cases.

\subsection{DDPG with Known CN (Oracle Baseline)}
For \textit{known CN}, the state representation includes both tracking information and normalized fuel reactivity,
\begin{align}
s_k =
\begin{bmatrix}
\tanh(e_k / 2) \\
\tanh(e_k - e_{k-1}) \\
\theta_k^{\text{norm}}
\end{bmatrix},
\end{align}
where $e_k = y^{\star} - CA50_k$ and $\theta_k^{\text{norm}}$ denotes the normalized CN. This state yields an approximately Markov description for value estimation. The hyperbolic tangent transformation bounds the tracking features and improves numerical conditioning during training. This case defines the fully observed oracle information structure and is used as a refer-ence state representation; the main RL results below focus on hidden-CN controllers.

\subsection{DDPG with Observation-Only State (Aliased Baseline)}
\label{sec:obs_only}

For \textit{unknown CN}, the observation-only baseline restricts the state to tracking information alone, 
\begin{align}
s_k^{\text{obs}} =
\begin{bmatrix}
\tanh(e_k / 2) \\
\tanh(e_k - e_{k-1})
\end{bmatrix},
\end{align}
thereby intentionally omitting direct information about fuel reactivity. The actor outputs normalized actions $a_k \in [-1,1]^2$, linearly mapped to the physical actuator limits of SOI and GPP. Standard DDPG training is employed with target networks, experience replay, and the deterministic policy gradient; these components form the continuous-control backbone used consistently across all RL configurations. This observation-only formulation probes a key limitation: if the state omits the latent variable governing the transition map, good CA50 tracking alone need not imply that the policy has learned the correct control mechanism.

\subsection{DDPG with Recurrent Architecture (RDPG Baseline)}
\label{sec:rdpg}

To evaluate whether temporal memory alone can compensate for the absence of direct CN measurements, a recurrent deterministic policy gradient (RDPG) baseline is implemented. In recurrent reinforcement learning, the hidden state acts as a learned summary of the observation history and can be interpreted as an implicit approximation of the belief state in a partially observable problem. In this architecture, a gated recurrent unit (GRU) processes the observation sequence,
\begin{align}
h_k = \mathrm{GRU}(h_{k-1}, s_k^{\text{obs}}),
\end{align}
where $s_k^{\text{obs}}$ consists of the normalized tracking error and its temporal difference. The hidden state $h_k$ is provided to both the actor and critic networks, allowing the policy to condition its action on accumulated historical information rather than instantaneous observations alone.

\subsection{DDPG with GRU-Based CN Estimation (GRU-RL)}\label{sec:gru_rl}
The recurrent baseline as can be seen in RDPG is important because it tests whether generic history-dependent deep RL is sufficient to overcome the state aliasing identified in Section~\ref{sec:problem}. If the observation history contains enough information about latent CN, the recurrent state should disambiguate operating conditions that are indistinguishable in instantaneous error coordinates. However, the present setting is particularly demanding where the combustion map is many-to-one, CN varies within an episode but is not directly measured, and similar short-horizon error trajectories can arise from different underlying CN values under different actuator combinations. Under such aliasing, the recurrent state may still mix trajectories associated with distinct latent fuel conditions, forcing the critic to average return targets over histories that require different optimal actions. Consequently, although recurrence increases representational capacity, it does not guarantee recovery of the latent variable that governs combustion sensitivity. This point is central to the novelty of the present study: the application exposes a concrete failure mode of standard end-to-end deep RL in safety-critical control, namely that memory alone can enrich the function approximator while still leaving the learned policy physically inconsistent or operationally fragile. Motivated by this limitation, and without modifying the underlying actor--critic learning algorithm, we introduce a CN-augmented separated estimation--control architecture that explicitly estimates fuel reactivity and incorporates it into the reinforcement learning state. Unlike online estimate-then-control pipelines where the controller is designed assuming accurate CN, here the actor–critic is optimized entirely under the GRU's noisy, lagged output. The policy therefore learns to regulate CA50 under the same approximation errors it will encounter at deployment, eliminating the mismatch that arises when a separately-calibrated estimator is coupled to a controller with different information assumptions.

\subsubsection{GRU-Based CN Estimator}

To address partial observability in the unknown-CN setting, this paper proposes a lightweight gated recurrent unit (GRU) network which is trained offline to estimate cetane number from recent input–output history. The estimator operates on a sliding window of past actuator commands and measured combustion phasing,
\begin{align}
\widehat{\theta}_k
=
f_\omega
\left(
\{\mathrm{SOI}_{k-i}, \mathrm{GPP}_{k-i}, \mathrm{CA50}_{k-i}\}_{i=0}^{n-1}
\right),
\end{align}
with a 12-cycle deployment window in all reported controller rollouts. This formulation reflects the fact that fuel reactivity is not directly measurable but influences combustion dynamics over multiple cycles. By conditioning the estimate on recent control actions and observed CA50 response, the GRU implicitly captures the temporal signature of fuel reactivity embedded in the closed-loop combustion process. The estimator parameters $\omega$ are trained using a mean squared error objective,
\begin{align}
\mathcal{L}_{\text{CN}}
=
\mathbb{E}
\left[
(\widehat{\theta}_k - \theta_k)^2
\right],
\end{align}
where $\theta_k$ denotes the true CN used in the GP-based digital twin during data generation. Training data are generated by simulating trajectories of the surrogate model under varying CN profiles and actuator inputs. To avoid learning from unreliable regions of the surrogate, samples with high GP predictive uncertainty are excluded from the training set, improving generalization across CN trajectories.

The GRU observer is the only component in the overall pipeline that is ever supervised with true CN. Once trained, the observer weights are frozen and the actor--critic loop receives only the deployable estimate $\widehat{\theta}_k$, never the oracle signal $\theta_k$ itself. The estimator therefore acts as a privileged-information distillation bottleneck: true CN is compressed offline into a low-dimensional signal that is feasible at execution time, so both actor and critic are optimized under the same noisy latent-state interface that will be available on the real engine. The relationship of this design to asymmetric critic formulations and LUPI is discussed in Remark~\ref{rem:lupi}.

The proposed architecture augments the DDPG state with the output of the shared GRU-based CN estimator described above, as
\begin{align}
s_k =
\begin{bmatrix}
\tanh(e_k / 2) \\
\tanh(e_k - e_{k-1}) \\
\widehat{\theta}_k^{\text{norm}}
\end{bmatrix},
\end{align}
where $\widehat{\theta}_k^{\text{norm}}$ denotes the normalized CN estimate. The first two components encode current tracking error and its temporal variation, while the third provides a compact representation of latent fuel reactivity inferred from recent dynamics. This augmentation transforms the problem from a partially observable setting, where identical error states may correspond to different underlying fuel conditions, into an approximately Markovian representation by explicitly reconstructing the dominant hidden variable governing combustion sensitivity. Empirically, this state augmentation stabilizes critic training and mitigates the state-aliasing effects observed in observation-only and recurrent baselines, resulting in more consistent policy convergence under time-varying CN profiles.

Conceptually, this architecture is a compact belief-state approximation rather than a generic increase in network input dimension. The GRU estimator compresses recent input--output history into an explicit proxy for the dominant latent variable, while the actor--critic module solves the continuous control problem conditioned on that proxy. This separation changes the critic's learning problem fundamentally: Bellman targets are no longer aggregated solely across trajectories that are aliased with respect to CN but are conditioned on an estimate of the hidden fuel state, reducing target inconsistency in the multivariable combustion map and enabling a control law that is both accurate and physically interpretable.

\subsubsection{Physics-Informed Reward Design for GRU-RL}

Consistent with the CB reward design in Section~\ref{rew_design}, the proposed GRU-RL controller is trained with a structured reward that balances CA50 tracking, actuator regularity, fuel-reactivity-consistent GPP usage, actuator rate feasibility, and cumulative glow-plug energy usage. The GPP terms are included for physical reasons rather than only numerical regularization: the glow plug is an electrical heating actuator with finite thermal capacity, electrical load, and duty-cycle limits. A controller that regulates CA50 by holding GPP near saturation is therefore not acceptable for deployment, even if the short-horizon tracking error is small. The cumulative GPP budget represents this finite energy and thermal-duty allocation over the 5000-cycle horizon.

At each cycle, the proposed controller receives the scalar reward
\begin{equation}
\begin{aligned}
r_{\mathrm{GRU\text{-}RL}}^{(k)}
&=
\frac{1}{100}\bigl(
r_{\mathrm{track}}^{(k)}
+r_{\mathrm{smooth}}^{(k)}
+r_{\mathrm{GPP}}^{(k)}
\\
&\quad
+r_{\mathrm{rate}}^{(k)}
+r_{\mathrm{budget}}^{(k)}
\bigr),
\end{aligned}
\end{equation}
where the factor $1/100$ is a numerical scaling used for stable actor--critic optimization. The tracking term is
\begin{align}
r_{\mathrm{track}}^{(k)}
&=
\begin{cases}
100\exp\!\left(-2\bar e_k^2\right), & |\bar e_k|\le 1^\circ\mathrm{CA},\\[3pt]
-50\bar e_k^2, & \text{otherwise},
\end{cases}
\\
\bar e_k&=\mathrm{clip}(e_k,-5,5),
\end{align}
so small tracking errors are strongly rewarded while large CA50 deviations remain heavily penalized. Smooth actuation is encouraged through
\begin{equation}
r_{\mathrm{smooth}}^{(k)}
=
-\lambda_s \left\|a_k-a_{k-1}\right\|_2,
\end{equation}
where $a_k\in[-1,1]^2$ is the normalized SOI/GPP action vector and $\lambda_s=50$.

The GPP shaping term is fuel-reactivity aware and is evaluated using the same GRU-based fuel-reactivity estimate that is supplied to the actor and critic. Thus, the reward shaping remains consistent with the deployable information interface rather than relying on oracle CN. Let
\begin{equation}
c_k=\mathrm{clip}\!\left(
\frac{\widehat{\theta}_k-\theta_{\min}}{\theta_{\max}-\theta_{\min}},
0,1
\right),
\qquad
\bar g_k=\frac{\mathrm{GPP}_k}{\mathrm{GPP}_{\max}}.
\end{equation}
For a five-cycle recent GPP window, define
\begin{equation}
\dot g_k=\frac{\mathrm{GPP}_k-\mathrm{GPP}_{k-4}}{5}.
\end{equation}
The GPP term is then
\begin{equation}
r_{\mathrm{GPP}}^{(k)}
=
r_{\mathrm{trend}}^{(k)}
-20\,c_k\bar g_k,
\end{equation}
with
\begin{equation}
r_{\mathrm{trend}}^{(k)}
=
\begin{cases}
-10\,\dot g_k(1+2c_k), & \dot g_k>0,\\[3pt]
2|\dot g_k|c_k, & \dot g_k<0 \ \mathrm{and}\ c_k>0.3,\\[3pt]
0, & \text{otherwise}.
\end{cases}
\end{equation}
This structure penalizes increasing GPP most strongly at high estimated CN, where intrinsic fuel reactivity should reduce the need for ignition assistance, and rewards reducing GPP when the estimated fuel reactivity is sufficiently high. At low estimated CN, the CN-weighted magnitude penalty weakens, allowing the controller to use more glow-plug assistance when it is physically justified.

The GPP rate-of-change term enforces actuator feasibility:
\begin{equation}
r_{\mathrm{rate}}^{(k)}
=
-\lambda_{\mathrm{rate}}
\left[
\max\left(0,
|\mathrm{GPP}_k-\mathrm{GPP}_{k-1}|-\Delta_{\mathrm{GPP}}
\right)
\right]^2,
\end{equation}
where $\Delta_{\mathrm{GPP}}=1~\mathrm{W/cycle}$ and $\lambda_{\mathrm{rate}}=50$. At the fixed engine speed used here, this corresponds to the $10~\mathrm{W/s}$ glow-plug power rate limit used in the controller implementation.

Cumulative glow-plug usage is constrained through
\begin{align}
r_{\mathrm{budget}}^{(k)}
&=
-\lambda_B
\max\left(
0,\frac{E_k-\mathcal{B}}{\mathcal{B}}
\right),
\\
E_k&=\sum_{i=1}^{k}\mathrm{GPP}_i,
\end{align}
with $\lambda_B=200$. Since engine speed is fixed in the present study, $E_k$ in W-cycle units is proportional to electrical energy consumption. The budget $\mathcal{B}$ is computed from an energy-aware ideal GPP--CN schedule with a 20\% margin,
\begin{equation}
\mathcal{B}
=
1.2\sum_{k=1}^{T}\mathrm{GPP}_{\mathrm{ideal}}(\widehat{\theta}_k),
\qquad T=5000.
\end{equation}
Thus, the controller is allowed to use substantial GPP when low fuel reactivity requires ignition assistance, but it is discouraged from treating the glow plug as an unlimited actuator. At evaluation, the same GPP rate and remaining-budget constraints are enforced on the commanded action. The observation-only DDPG and RDPG baselines were also given similar tuned hidden-CN reward variants with tracking, actuation-regularity, GPP-economy, and budget-feasibility terms, but even under tuned baseline rewards, the lack of a deployable fuel-reactivity state leads to the failure modes shown in Section~\ref{sec:results}.

\section{Results and Discussion}
\label{sec:results}

\subsection{Contextual Bandit Results}

This subsection evaluates LinUCB and history-augmented contextual bandits (H-CB) for CA50 regulation over 5000 engine cycles using the GP surrogate with additive Gaussian measurement noise (\(\sigma=0.5\)$^\circ$ CA). LinUCB is applied when CN is known, whereas H-CB is used when CN is unmeasured. The exploration parameter is set to \(\alpha=1.0\), and the H-CB history window is fixed at \(L=5\). CB simulations are conducted with $n_{\mathrm{GPP}}=71$ and $n_{\mathrm{SOI}}=11$. Quantitative results for the slowly varying and rapidly varying CN cases are summarized in Table~\ref{tab:cb_performance}. Representative responses are shown in Fig.~\ref{fig:cb_slow} for the slowly varying cases and Fig.~\ref{fig:cb_rapid} for the rapidly varying cases. 

\subsubsection{Slowly Varying CN}

Under slowly varying CN, H-CB achieves better comparatively tracking performance than LinUCB despite not using explicit CN information (Table~\ref{tab:cb_performance}). This indicates that recent input--output history provides sufficient information to infer the dominant fuel regime when CN evolves gradually. As shown in Fig.~\ref{fig:cb_slow}, LinUCB exhibits larger tracking deviations and applies higher GPP over the episode, leading to greater glow plug energy consumption. In contrast, H-CB uses recent CA50 and actuation history to adjust the control action more effectively, reducing both tracking error and GPP usage. Thus, under mild CN variability, history augmentation improves the physical efficiency of the CB policy while maintaining comparatively better CA50 regulation.

\subsubsection{Rapidly Varying CN}

\begin{figure}[htpb!]
    \centering
    \includegraphics[width=\columnwidth,height = 1.4\columnwidth]{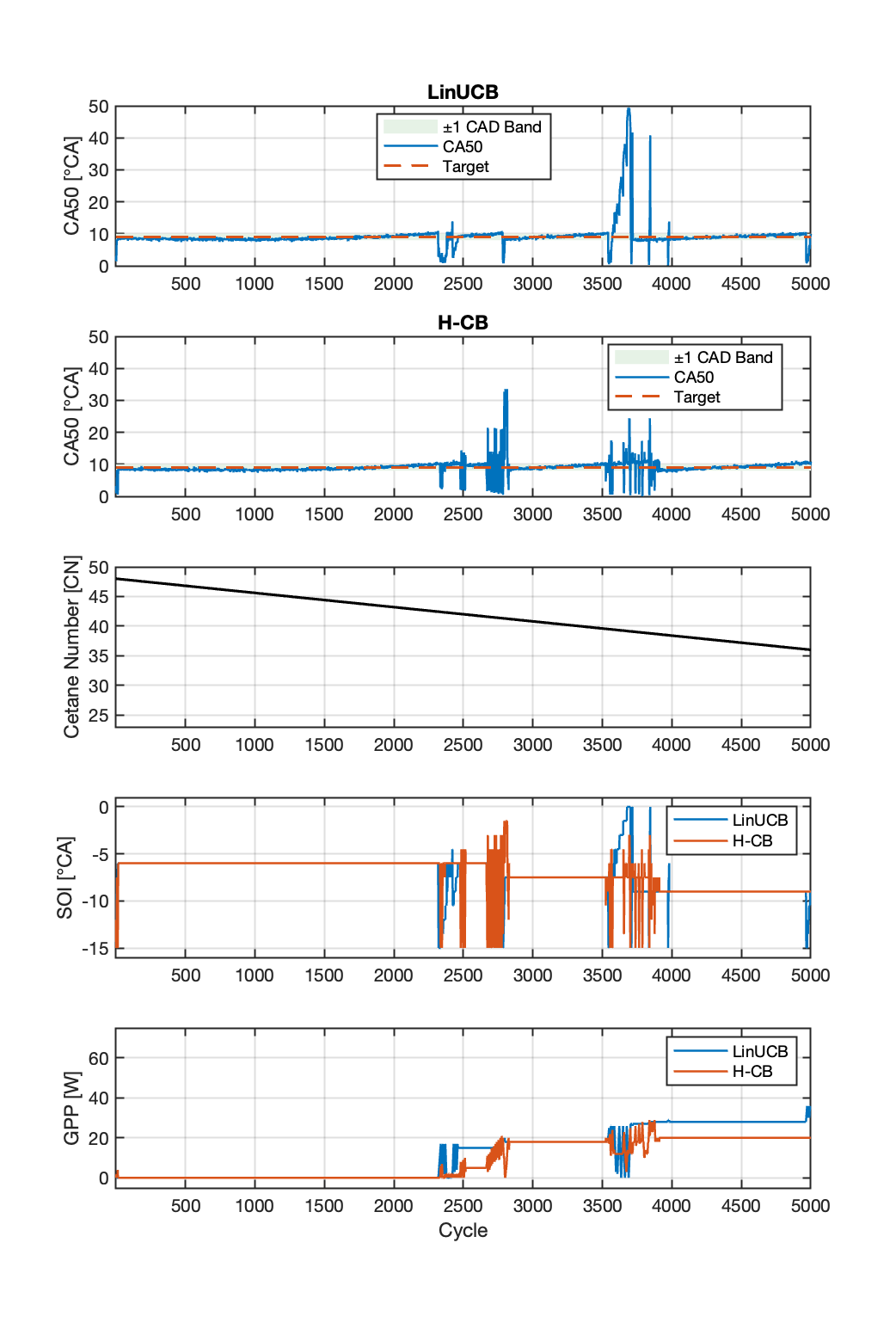}
    \caption{CB performance under slowly varying CN for known-CN LinUCB and unknown-CN H-CB.}
    \label{fig:cb_slow}
\end{figure}

Under rapid CN variation, the performance gap between the two CB formulations becomes more pronounced. LinUCB benefits from direct access to the measured CN and maintains acceptable CA50 tracking, with an RMSE of $0.99^\circ$CA (Table~\ref{tab:cb_performance}). In contrast, H-CB exhibits substantial tracking degradation, with an RMSE of $9.02^\circ$CA, because its fixed history window cannot update sufficiently fast after abrupt CN changes. This behavior is also evident in Fig.~\ref{fig:cb_rapid}, where H-CB shows large transient excursions following rapid changes in fuel reactivity. Although H-CB consumes less GPP energy than LinUCB, this reduction comes at the cost of poor combustion-phasing regulation. These results show that history augmentation is effective when CN varies slowly, but becomes insufficient when the latent fuel state changes faster than the memory window can resolve.

\begin{table}[htpb!]
\small
\centering
\caption{Performance comparison of CB methods under different CN profiles
for $\mathrm{CA50}_{\mathrm{des}}=9^\circ$\,CA, along with total glow plug
energy consumption over 5000 cycles.}
\vspace{0.1cm}
\label{tab:cb_performance}
\begin{tabular}{l l c c c}
\toprule
\textbf{Method} & \textbf{CN Profile} & \textbf{RMSE} & \textbf{MAE} & \textbf{GPP Use} \\
& \textbf{Variation} & ($^\circ$CA) & ($^\circ$CA) & (Wh) \\
\midrule
LinUCB & Known; Fixed  & 1.25 & 0.83 & 0.06 \\
LinUCB & Known; Slow   & 3.99 & 1.29 & 1.62 \\
LinUCB & Known; Rapid  & 1.63 & 0.83 & 4.88 \\
\midrule
H-CB   & Unknown; Fixed  & 0.78 & 0.69 & 0.00 \\
H-CB   & Unknown; Slow   & 2.08 & 0.99 & 1.22 \\
H-CB   & Unknown; Rapid  & 3.51 & 1.65 & 3.75 \\
\bottomrule
\end{tabular}
\end{table}

\subsubsection{Limitations of Contextual Bandits for Combustion Control}

These results establish that CB methods can partially compensate for fuel variability, but their performance is fundamentally limited by the underlying information structure and the need to satisfy physical constraints. LinUCB benefits from direct CN information, especially under rapid CN variation, but remains constrained by its reward model structure, which approximates inherently nonlinear combustion dynamics. H-CB mitigates the absence of CN through recent input-output history, but its fixed window imposes a trade-off between responsiveness and latent-state inference. Under slowly varying CN, this history is helpful to comparatively improve both tracking and GPP energy usage. However, as CN variability increases, the history window becomes stale, leading to large transients and excursions indicative of misfire-like events. More critically, both methods are myopic and lack temporal credit assignment, preventing effective use of longer-horizon structure in the latent CN process. Although H-CB can reduce GPP consumption, its rapid-CN performance shows that energy reduction alone is insufficient without reliable combustion-phasing regulation. These limitations arise from the absence of explicit latent-state reconstruction and motivate sequential RL formulations that incorporate temporal reasoning.

\begin{figure}[htpb!]
    \centering
    \includegraphics[width=\columnwidth,height = 1.4\columnwidth]{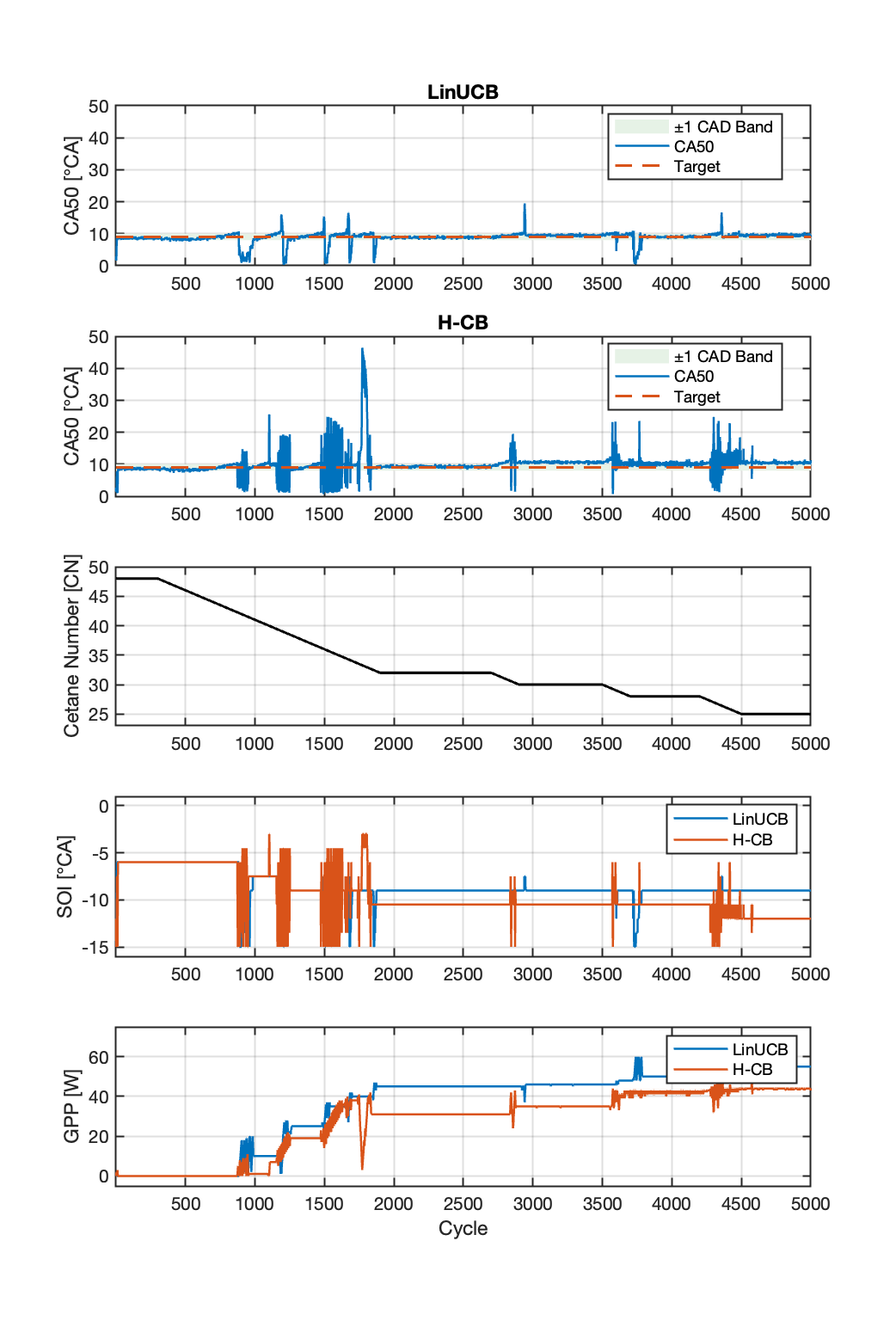}
    \caption{CB performance under rapidly varying CN for known-CN LinUCB and unknown-CN H-CB.}
    \label{fig:cb_rapid}
\end{figure}

\subsection{Reinforcement Learning Simulation Results}

All RL training and evaluation are conducted on the GP digital twin constructed from experimental combustion data. Each episode consists of 5000 combustion cycles. Policies are trained on a single CN trajectory spanning the full reactivity range and subsequently evaluated on three unseen CN profiles. A central premise of the evaluation is that CA50 tracking error alone does not adequately characterize controller quality. Because the combustion map involving $(\mathrm{SOI}, \mathrm{GPP}, \mathrm{CN})$ is many-to-one, a controller may achieve accurate phasing through a physically implausible actuation mechanism yet be unsuitable for deployment. Therefore, actuator trajectories are examined together with tracking error throughout this section to assess whether each learned policy is both accurate and physically consistent.

\subsubsection{Observation-only DDPG baseline}\label{subsec:ddpg_obs}
The observation-only DDPG agent receives only the normalized tracking error and its temporal difference, with no access to CN. Figure~\ref{fig:ddpg_obs} shows a representative rollout under a multi-rate CN profile. Although the GPP trend during first ~4500 cycles, is directionally plausible because lower fuel reactivity requires greater ignition assist, its magnitude is excessive relative to the CN-augmented baseline. The baseline was trained with GPP-economy and cumulative-budget terms, and its evaluation uses a hard GPP safety filter; nevertheless, it still fails to allocate GPP appropriately. Instead, it overuses GPP until the budget is exhausted, causing GPP to be forced to zero around cycle~4750, after which CA50 deviates abruptly and is never recovered. Such loss of combustion-phasing regulation is clearly undesirable in an engine combustion control framework. Thus, the observation-only policy does not achieve tracking comparable to the CN-augmented controller, as its apparent high-CN stability depends on an unsustainable GPP draw that is eventually terminated by the energy constraint.
\begin{figure}[htpb!]
\centering
\includegraphics[width=\columnwidth,height = 1.4\columnwidth]{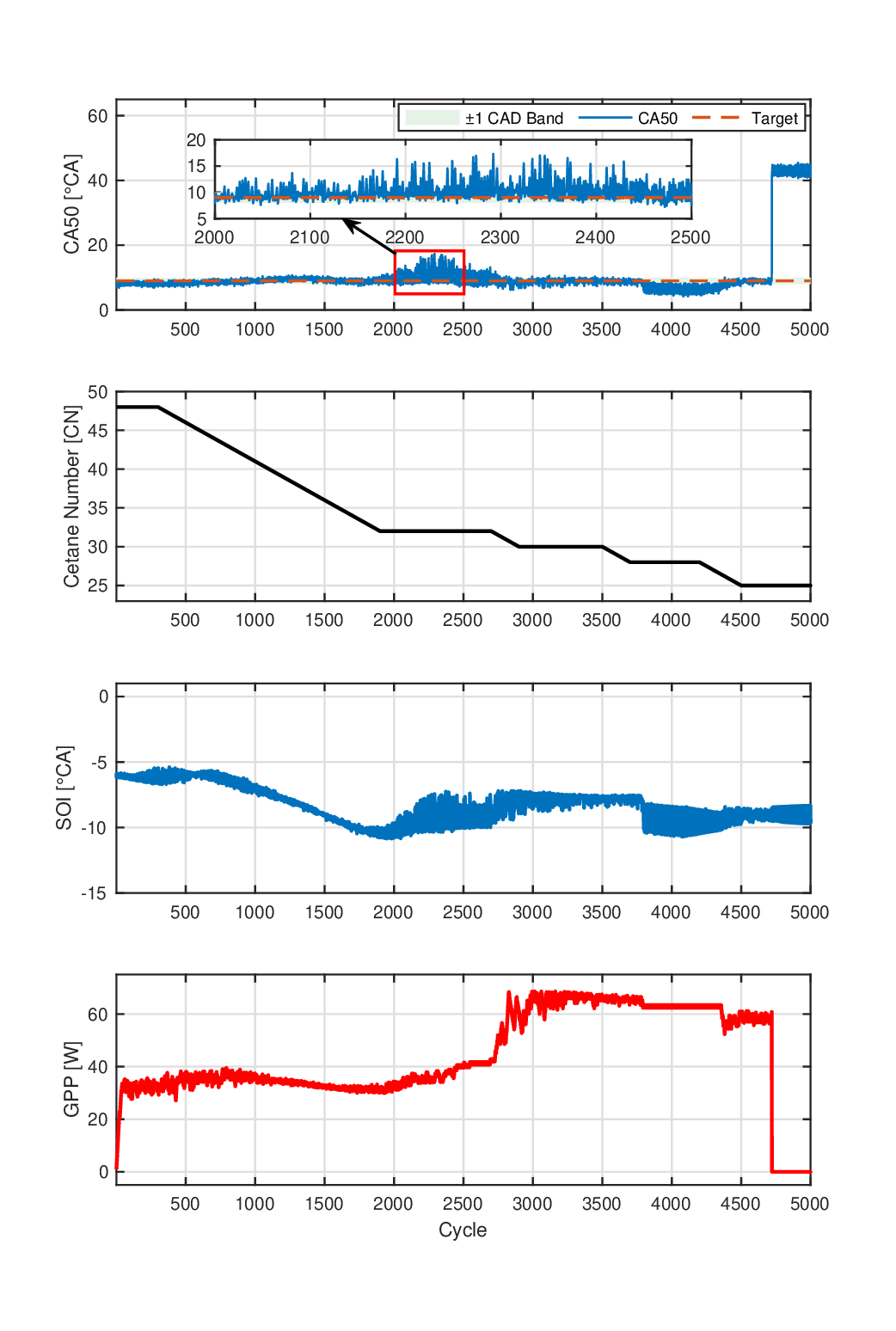}
\caption{Observation-only DDPG performance on the multi-rate CN profile with MAE $=2.80^\circ$ CA and RMSE $=8.17^\circ$ CA. Notable pathologies: (i) GPP rises to near saturation ($\sim 66~\mathrm{W}$) around cycles~3000-3500 at CN $\approx 30$, indicating excessive ignition-assist use without reactivity-state information; (ii) GPP budget exhaustion near cycle~4750 forces GPP to zero at CN $=25$, thereby diverging CA50 to $\approx 44^\circ$ CA and producing severely late combustion. The inconsistent GPP actuation shows that tracking error alone is insufficient to assess controller quality.}
\label{fig:ddpg_obs}
\end{figure}

This behavior is consistent with state aliasing under partial observability, as stated in Proposition~\ref{prop:reactive_insufficiency}. With access only to $(e_k,\Delta e_k)$, distinct CN values producing similar instantaneous errors are indistinguishable to the critic, which must fit a single value function over aliased states. The actor therefore inherits a compromise gradient that cannot associate CN level with the long-horizon cost of excessive GPP expenditure, causing near-saturation GPP in the mid-to-low-CN range instead of budget allocation proportional to true reactivity demand. This is neither a function-approximation failure nor a capacity limitation, since the low-CN, low-GPP operating point lies within the GP surrogate's training support, and the same architecture learns a stable, budget-respecting policy when augmented with the GRU-based CN estimate to be seen later. The failure is informational because $(e_k,\Delta e_k)$ is not a sufficient statistic for the optimal control law when hidden CN governs both the actuation-to-combustion mapping and the marginal cost of GPP energy. Under this information structure, the aliased critic provides no gradient signal to correct the misallocation, so the baseline fails for representational reasons rooted in the information structure rather than finite model capacity.

\subsubsection{Recurrent DDPG baseline}\label{subsec:rdpg}

The recurrent DDPG (RDPG) baseline augments the observation-only agent with a GRU-based hidden state to infer latent CN from history. Figure~\ref{fig:rdpg} shows a representative rollout under a multi-rate CN profile. Despite the added memory, $44.3\,\%$ of cycles, CA50 regulation remains outside the $\pm 1^\circ$ CA band. Two adverse intervals are highlighted by the inset zoom boxes in Fig.~\ref{fig:rdpg}. The left zoom box covers cycles~1200-2250, where in 1050-cycle window, CA50 remains persistently above the $9^\circ$ CA reference, indicating a systematic late-combustion bias rather than symmetric noise. This suggests that the recurrent state fails to adapt the control authority sufficiently fast as CN decreases through the mid-ramp region. In this window, $65.7\,\%$ of cycles violate the $\pm 1^\circ$ CA band, while SOI exhibits a cycle-to-cycle standard deviation of $1.47^\circ$ CA. Sustained retarded combustion of this magnitude can push the system toward undesirable operating regimes and violate practical combustion and hardware limits.

The right zoom box covers cycles~4000-4200 at the constant-CN plateau of CN $\approx 28$ and represents the most safety-critical interval. In this 200-cycle window of sustained oscillation, the CA50 residing in lower extreme (approx. around $3.09^\circ$ CA), indicates excessively early combustion, where heat release occurs well before top dead centre and may produce high peak in-cylinder pressures, pre-ignition, structural knock, or bearing overload. Whereas the upper extreme cases, CA50 ($\approx $ around $ 18.90^\circ$ CA) corresponds to severely late combustion, with incomplete energy conversion before the expansion stroke and increased risk of misfire, partial burn, and catalytic-converter thermal shock. The simultaneous occurrence of early and late combustion extremes within the same local neighborhood shows that the recurrent policy has not learned a stabilizing feedback law at this operating point, but instead cycles between over-advanced and over-retarded combustion phasing even though GPP remains within a physically plausible range unlike the observation-only DDPG baseline. However, RDPG remains substantially inferior to the CN-augmented controller. Its MAE is more than four times larger, and its failure mode is qualitatively different, involving persistent mid-ramp band violations and dual-sided combustion extremes rather than terminal budget collapse. These results reveal structural limitation of generic recurrent policies. 
\begin{figure}[htpb!]
\centering
\includegraphics[width=\columnwidth,height = 1.4\columnwidth]{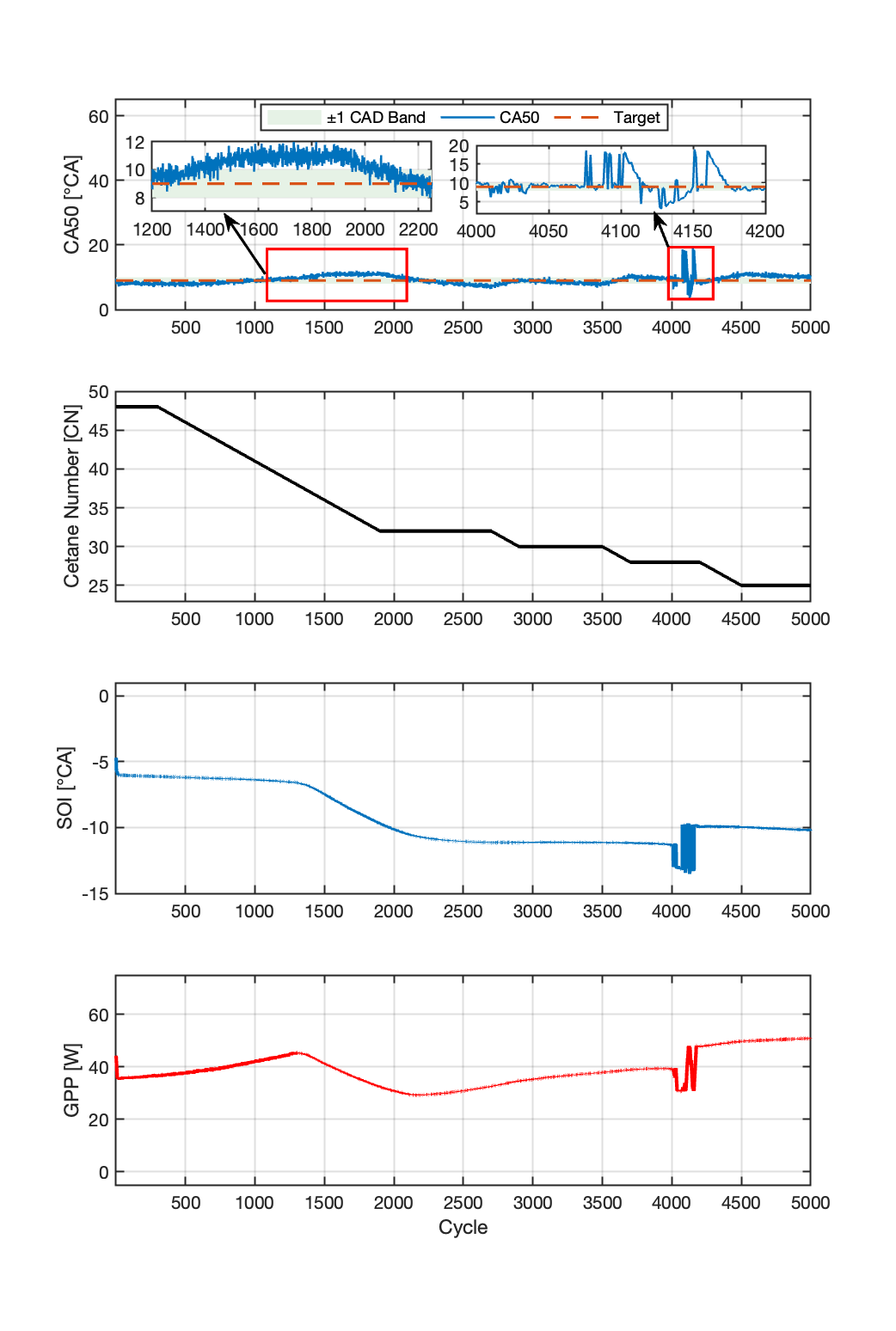}
\caption{RDPG performance on a multi-rate CN profile.
with MAE $=1.012^\circ$\,CA, RMSE $=1.294^\circ$\,CA. Notable pathologies: (i)~98\,\% band-violation rate over cycles~1501-2000 (CN~$\approx$~33); (ii) impulsive CA50 spike to $18.9^\circ$ CA at cycle~4151 (CN~$\approx$~28); (iii) 84.4\,\% band violation over the final 500 cycles (CN~=~25).}
\label{fig:rdpg}
\end{figure}

\subsubsection{CN-augmented controller}\label{subsec:cnaug}

Augmenting the RL state with the GRU-based CN estimate yields a qualitatively different and physically consistent control law. On the training CN ramp, the controller applies a strongly advanced SOI ($\approx -13.4^\circ$ CA) and moderate GPP ($\approx 40~\mathrm{W}$) at low CN ($\approx 25$), then retards SOI monotonically to $\approx -5.2^\circ$ CA and reduces GPP to $\approx 26~\mathrm{W}$ as CN increases toward $48$, matching the reduced need for ignition assist at higher fuel reactivity (Fig.~\ref{fig:cnaug_ep25}). The overall within-episode MAE against the $9^\circ$ CA target is $0.181^\circ$ CA, with the low-CN transient MAE ($\approx 0.22^\circ$ CA) only marginally higher than that in the settled high-CN region ($\approx 0.17^\circ$ CA). This behavior achieves substantially better overall tracking than observation-only DDPG and, more importantly, does so through a physically interpretable mechanism in which both actuators adapt coherently with inferred fuel reactivity rather than compensating through a spurious high-assist pattern. By supplying a compact estimate of the dominant latent variable, the proposed architecture restores an approximately Markovian state description and yields gradients consistent with the underlying combustion physics.

\begin{figure}[htpb!]
\centering
\includegraphics[width=\columnwidth,height = 1.4\columnwidth]{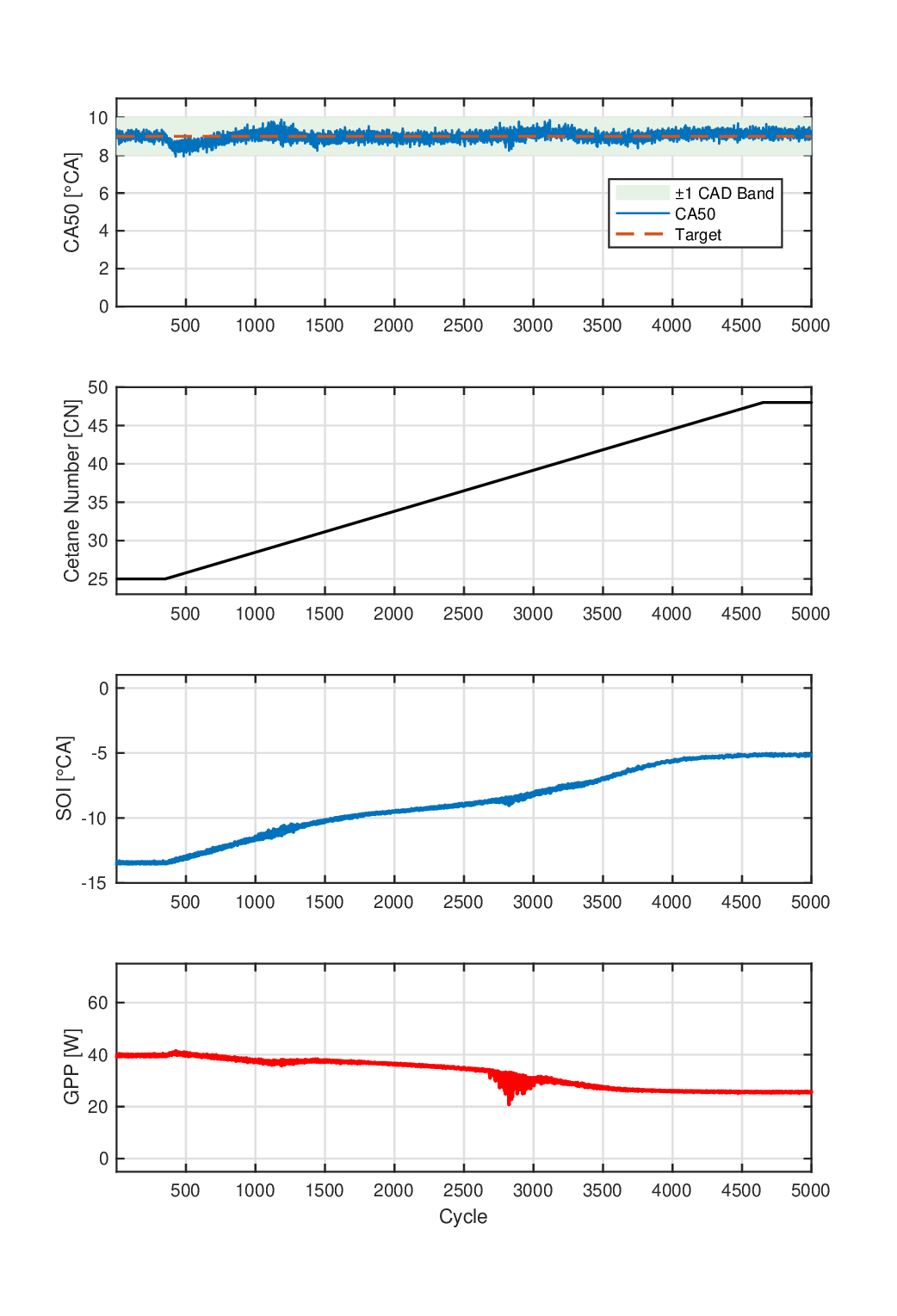}
\caption{CN-augmented DDPG performance on the training CN ramp with MAE $=0.181^\circ$ CA. As CN increases from $\approx 25$ to $48$, SOI retards from $\sim -13.4^\circ$ CA to $\sim -5.2^\circ$ CA and GPP decreases from $\sim 40$ to $\sim 26~\mathrm{W}$, demonstrating physically consistent, fuel-reactivity-aware control enabled by the GRU-based CN estimate.
}
\label{fig:cnaug_ep25}
\end{figure}

\subsubsection{Generalization to unseen CN profiles}\label{subsec:generalization}

\begin{figure*}[htpb!]
\centering
\includegraphics[width=1\textwidth,height=0.8\textwidth]{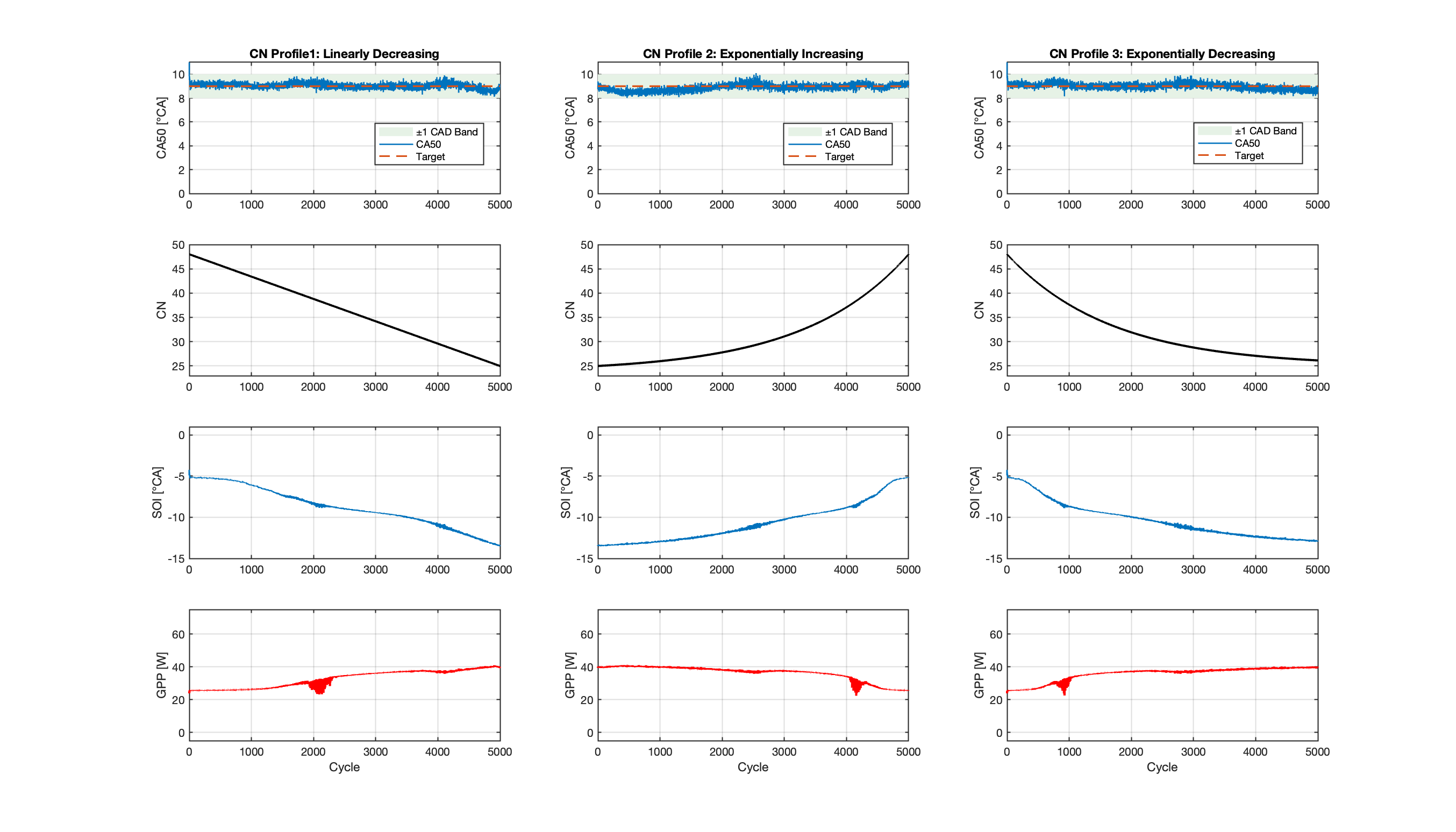}
\caption{Generalization of the CN-augmented controller on three unseen CN profiles. (i) MAE remains low at $0.187^\circ$, $0.235^\circ$, and $0.201^\circ$ CA for the linearly decreasing, exponentially increasing, and exponentially decreasing CN profiles, respectively. (ii) All errors remain within $1.3\times$ of the training MAE of $0.181^\circ$ CA. (iii) All profiles remain within the $\pm 1^\circ$ CA band. (iv) The actuator trajectories adapt smoothly in the expected physical direction with changing fuel reactivity.}
\label{fig:generalization}
\end{figure*}

The CN-augmented controller generalizes well to all three unseen CN profiles, preserving tight CA50 regulation and physically consistent actuator adaptation throughout the 5000-cycle horizon (Fig.~\ref{fig:generalization}). On both decreasing-CN profiles, namely linear CN $48\to25$ and exponential CN $48\to26$, SOI advances monotonically from approximately $-5.1^\circ$ CA at high CN to $-13.4^\circ$ CA at low CN, while GPP increases from approximately $25.5~\mathrm{W}$ to $40~\mathrm{W}$. On the exponentially increasing profile, CN $25\to48$, the opposite trends hold. SOI retards from $-13.4^\circ$ CA to $-5.2^\circ$ CA, and GPP decreases from $39.8~\mathrm{W}$ to $25.6~\mathrm{W}$. These directional responses are physically consistent with a fuel-reactivity-conditioned control law, confirming that the GRU-based CN estimate encodes the dominant latent variable with sufficient fidelity to direct the actuators correctly on dynamics and rate profiles absent from training.

Across all profiles, the controller maintains close regulation within the desired $1^{\circ}$ CA band around CA50 reference. These results indicate that the GRU-based CN estimate generalizes across unseen CN dynamics. Importantly, the actuator trajectories remain physically consistent across all these profiles. As fuel reactivity decreases, the controller advances SOI and increases GPP to provide greater ignition assistance, while increasing CN leads to retarded SOI and reduced GPP. This confirms that the learned policy captures the underlying fuel-reactivity dependence of the combustion process rather than memorizing a trajectory-specific actuation pattern.

The quantitative results in Table~\ref{tab:generalization} support the same conclusion. At the training reference, $y^{\star}=9^\circ$\,CA, all three unseen CN profiles show consistently accurate tracking. The exponentially increasing profile is relatively more challenging despite sharing the same CN direction as the training ramp, from $25$ to $48$, because its accelerating rate of change induces faster fuel-reactivity transitions than the constant-rate profile used during training. More importantly, the same controller, trained only at $y^{\star}=9^\circ$\,CA, maintains accurate tracking across all tested CA50 references without retraining. The results at $y^{\star}=8^\circ$\,CA and $y^{\star}=10^\circ$\,CA show that the policy generalizes well to modest shifts around the training reference. At $y^{\star}=7^\circ$\,CA, the tracking degradation is mainly associated with the exponentially increasing CN profile, while both decreasing profiles remain comparatively well regulated. The most visible degradation occurs at $y^{\star}=11^\circ$\,CA, and it appears asymmetrically across the CN profiles. The two decreasing-CN profiles are more affected than the increasing-CN profile. This asymmetry has a physical interpretation. Enforcing late combustion phasing at $11^\circ$\,CA is most difficult when CN simultaneously approaches its lowest values, which occurs near the end of the decreasing profiles as CN approaches $25$. In contrast, the exponentially increasing profile ends at high reactivity, with CN nearing $48$, where late phasing is easier to maintain. Across all tested setpoints and CN profiles, the controller retains the same physically interpretable actuator behavior observed at the training reference. The consistent generalization across these profiles, despite the GRU operating on CN trajectories absent from its training distribution, reflects the train-deploy consistency of the architecture.

\begin{table}[htpb!]
\centering
\caption{Tracking performance of the CN-augmented controller on unseen CN profiles across five CA50 reference values. The controller was trained at $y^{\star} = 9$$^\circ$ CA only.}
\label{tab:generalization}
\footnotesize
\setlength{\tabcolsep}{3pt}
\begin{tabular}{@{}llccc@{}}
\toprule
\textbf{CA50$^{\text{ref}}$} & \textbf{CN profile} & \textbf{MAE} & \textbf{RMSE} & \textbf{Avg MAE} \\
 & & [$^\circ$ CA] & [$^\circ$ CA] & [$^\circ$ CA] \\
\midrule
\multirow{3}{*}{7~$^\circ$ CA} & Linear dec. & 0.253 & 0.398 & \multirow{3}{*}{0.318} \\
 & Exp. increase & 0.458 & 0.699 & \\
 & Exp. decrease & 0.241 & 0.317 & \\
\midrule
\multirow{3}{*}{8$^\circ$ CA} & Linear dec. & 0.191 & 0.261 & \multirow{3}{*}{0.231} \\
 & Exp. increase & 0.272 & 0.364 & \\
 & Exp. decrease & 0.229 & 0.295 & \\
\midrule
\multirow{3}{*}{\textbf{9$^\circ$ CA}$^*$} & Linear dec. & 0.186 & 0.239 & \multirow{3}{*}{0.208} \\
 & Exp. increase & 0.235 & 0.299 & \\
 & Exp. decrease & 0.201 & 0.255 & \\
\midrule
\multirow{3}{*}{10$^\circ$ CA} & Linear dec. & 0.184 & 0.234 & \multirow{3}{*}{0.199} \\
 & Exp. increase & 0.225 & 0.288 & \\
 & Exp. decrease & 0.189 & 0.240 & \\
\midrule
\multirow{3}{*}{11$^\circ$ CA} & Linear dec. & 0.713 & 1.332 & \multirow{3}{*}{0.438} \\
 & Exp. increase & 0.221 & 0.291 & \\
 & Exp. decrease & 0.379 & 0.819 & \\
\bottomrule
\multicolumn{5}{@{}l}{\scriptsize $^*$Training reference value.}
\end{tabular}
\end{table}

\subsubsection{Validation on a realistic multi-rate CN stress-test trajectory}
\label{subsec:external_validation}

As a final validation beyond the canonical unseen ramps, the CN-augmented controller was tested on a 5000-cycle multi-rate CN trajectory emulating mixed-rate fuel-reactivity variation. This experiment uses the best training checkpoint and is therefore interpreted as validation of the converged policy. The controller maintains tight combustion-phasing regulation throughout (Fig.~\ref{fig:external_cn_validation}), achieving a CA50 MAE of $0.170^\circ$ CA, RMSE of $0.218^\circ$ CA, and $99.98\%$ of cycles within the $\pm 1^\circ$ CA band as CN decrease from 48 to 25.  The maximum absolute error of 2.080◦CA occurs at cycle 1 and is attributed to the empty-history initialization of the GRU observer and controller state before the estimator buffer has accumulated representative combustion history. The actuator trends remain physically consistent throughout. As CN decreases from $48$ to $25$, SOI advances continuously from approximately $-5.1^\circ$ CA to $-13.5^\circ$ CA, and GPP rises from approximately $26~\mathrm{W}$ to $40~\mathrm{W}$, correctly reflecting the monotonically increasing ignition-assist demand at lower fuel reactivity. Because this trajectory combines a rapid staircase descent with prolonged constant-CN dwells and abrupt inter-level drops, it constitutes a stronger deployability stress test than the smooth benchmark ramps alone.

\begin{figure}[htpb!]
\centering
\includegraphics[width=\columnwidth,height = 1.4\columnwidth]{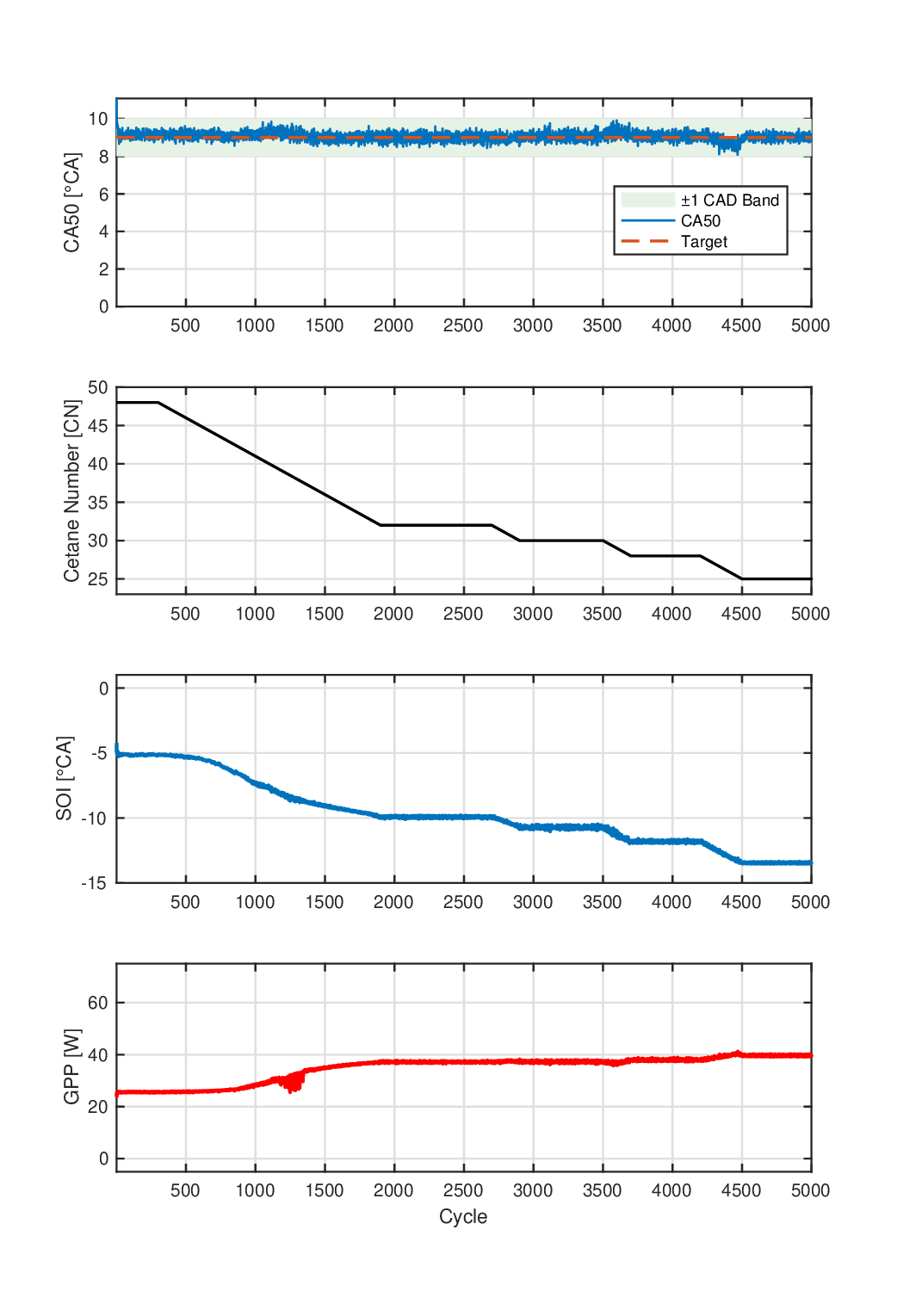}
\caption{Additional CN-augmented controller validation on a 5000-cycle multi-rate CN stress-test trajectory with MAE $=0.170^\circ$ CA, RMSE $=0.218^\circ$ CA, and $99.98\%$ of cycles within $\pm 1^\circ$ CA. As CN decreases from $48$ to $25$, SOI advances from $\sim -5.1^\circ$ CA to $\sim -13.5^\circ$ CA and GPP rises from $\sim 26$ to $\sim 40~\mathrm{W}$, demonstrating physically consistent reactivity-aware control under mixed-rate fuel-reactivity variation.}
\label{fig:external_cn_validation}
\end{figure}

Taken together, the results of Sections~\ref{subsec:ddpg_obs}-\ref{subsec:external_validation} show that controller evaluation in this multivariable combustion problem must assess both tracking accuracy and actuator realism. The observation-only baseline can preserve CA50 while converging to a physically incorrect high-assist shortcut, and the RDPG baseline shows that generic recurrent memory does not resolve this defect. Only the CN-augmented architecture achieves accurate tracking together with smooth, physically interpretable control, with the multi-rate stress test confirming that these properties persist beyond the smooth benchmark ramps. This is consistent with the POMDP analysis of Section~\ref{sec:problem}: by restoring approximate Markov structure through latent-state augmentation, the critic receives more consistent targets across CN regimes and the actor learns control actions aligned with combustion physics. The contribution is therefore not the addition of an auxiliary input per se, but the identification of a structural failure mode of conventional deep RL under hidden fuel reactivity and many-to-one combustion mappings, together with a separated estimation-control architecture that resolves it.

\begin{remark}
\label{subsec:estimator_quality}
The role of the deployable CN bottleneck is best understood at the controller level rather than as a standalone estimation objective. In the proposed architecture, $\hat{\theta}_k$ need not accurately reconstruct the true fuel property; it serves as a low-dimensional indicator of the prevailing fuel regime, enabling initialization in the appropriate operating region while CA50 error feedback provides fine-scale correction. Its value is structural: by exposing a compact representation of the dominant latent variable, it mitigates state aliasing within the control loop, yielding improved regulation and physically consistent actuation despite approximation error. Performance is therefore appropriately assessed through control quality and generalization rather than standalone estimation accuracy.
\end{remark}

\begin{remark}
\label{rem:lupi}
The proposed hidden-CN controller occupies a distinct position among methods that exploit training-time information asymmetry. In the LUPI framework \cite{vapnik2015learning} and asymmetric RL extensions \cite{baisero2021unbiased, ebi2025informed}, privileged latent variables are typically provided to the critic during training to improve gradient estimation. By contrast, the present architecture enforces deployment consistency: true CN is used only for offline training of the GRU estimator, while both actor and critic operate solely on $\hat{\theta}_k$ during policy learning and execution, eliminating any training-deployment representation gap. Although it forgoes the variance-reduction benefits of a privileged critic, it yields a practically aligned formulation for combustion control where CN is fundamentally unmeasured.
\end{remark}

\section{Conclusion}\label{sec:conclusion}

CA50 combustion phasing control in multi-fuel CI engines remains challenging under unknown and continuously varying fuel reactivity, since latent changes in cetane number (CN) perturb ignition delay and shift combustion dynamics from cycle to cycle. Although the combustion response is predominantly static with respect to control input, the unmeasured and temporally evolving CN makes the problem partially observable, so similar observable states may require different optimal control actions. Existing approaches do not fully address this information-structure challenge. Conventional methods remain calibration sensitive and adapt slowly under rapid CN variation. Data-driven estimate-then-control architectures improve performance, but their closed-loop behavior remains sensitive to real-time estimation error and latency. Moreover, when the controller is calibrated assuming accurate CN estimates rather than the noisy and lagged signal available at deployment, residual estimation errors propagate directly into control actions. This paper investigates learning formulations with increasing temporal and representational capacity, including LinCB, history-augmented CB, observation-only DDPG, recurrent DDPG (RDPG), and the proposed GRU-RL framework. The results show that CB degrades due to violation of i.i.d.\ context assumptions, fixed-history augmentation provides only partial improvement, observation-only RL remains limited by latent-state ambiguity, and generic recurrence alone is insufficient under rapid CN variation. The proposed GRU-RL framework addresses these limitations by integrating latent fuel-state inference with sequential decision-making. A GRU is trained offline to produce a deployable fuel-reactivity estimate from combustion history, and both actor and critic are optimized exclusively under this noisy estimate rather than oracle CN. This makes the learned policy implicitly robust to estimation approximation errors and removes the information gap between training and deployment. The resulting controller achieves reliable CA50 regulation even across unseen CN trajectories, with mean absolute tracking error below $0.25^\circ\,\mathrm{CA}$ at the training setpoint, smooth actuation. The generalization results further show that the GRU estimate remains useful on CN trajectories outside its training distribution and that the policy preserves physically consistent actuator behavior. Overall, the study establishes that supervised recovery of fuel-state information from history, when coupled with sequential policy optimization, is more effective than myopic contextual decision-making, observation-only learning, or generic recurrence alone. It also highlights train-deploy consistency, achieved by optimizing the policy under the deployed estimation signal rather than privileged oracle information, as a critical design principle for learning-based combustion control. Future work will extend the framework to higher-dimensional operating conditions and validate it on real engine hardware.

\section*{Declaration of conflicting interests}
The authors declared no potential conflicts of interest with respect to the research, authorship, and/or publication of this article.

\section*{Funding}
Research was sponsored by the DEVCOM Army Research Laboratory and was accomplished under Cooperative Agreement Number W911NF-20-2-0161. The views and conclusions contained in this document are those of the authors and should not be interpreted as representing the official policies, either expressed or implied, of the DEVCOM Army Research Laboratory of the U.S. Government. The U.S. Government is authorized to reproduce and distribute reprints for Government.

\section*{Acknowledgements}
The authors would like to thank the team at Engine Research center of University of Wisconsin Madison for providing us with the experimental data for this work. The authors would also like to mention that this work used computational resources at the Minnesota Supercomputing Institute (MSI), University of Minnesota.

\bibliographystyle{elsarticle-num}
\bibliography{Ref_BIO2}
\end{document}